\documentclass[a4paper,twocolumn,11pt,titlepage,unpublished]{quantumarticle}
\pdfoutput=1

\usepackage[T1]{fontenc}
\usepackage[utf8]{inputenc}
\usepackage[english]{babel}

\usepackage{amsmath,amssymb,amsthm}
\usepackage{mathtools}
\usepackage{bm}
\usepackage{fixmath}
\usepackage{physics}
\usepackage[square,comma,sort&compress,numbers]{natbib}

\usepackage{hyperref}

\usepackage{graphicx}
\usepackage{xcolor}
\usepackage{tikz}
\usetikzlibrary{arrows.meta,decorations.markings}
\usepackage{pgfplots}
\pgfplotsset{compat=1.18}
\usepackage{subcaption}
\usepackage{rotating}
\usepackage{placeins}

\usepackage{datetime2}
\usepackage{lipsum}
\usepackage{float}

\usepackage{algorithm}
\usepackage{algorithmic}

\newcounter{codimfootnote}

\newcommand{\transp}{\ensuremath{\mathrm{T}}}
\newcommand{\hc}{\ensuremath{\text{h.c.}}}
\newcommand{\C}{\mathbb{C}}
\newcommand{\R}{\mathbb{R}}
\newcommand{\cC}{\mathcal{C}}
\newcommand{\cS}{\mathcal{S}}
\newcommand{\da}{^\dagger}
\newcommand{\lam}{\lambda}
\newcommand{\blam}{{\bm{\lam}}}

\newcommand{\diag}{\operatorname{diag}}
\newcommand{\Real}{\operatorname{Re}}
\newcommand{\Img}{\operatorname{Im}}
\newcommand{\nxn}{{n \times n}}
\newcommand{\tnxtn}{{2n \times 2n}}
\newcommand{\mat}[1]{\begin{bmatrix}#1\end{bmatrix}}
\newcommand{\inv}{^{-1}}
\newcommand{\Sp}{\mathrm{Sp}^*(2n,\C)}

\newcommand{\jmvbmd}{joint-minimum-variation BMD}
\newcommand{\jmvabmd}{joint-minimum-variation ABMD}

\newtheorem{thm}{Theorem}
\newtheorem{property}{Property}
\theoremstyle{definition}
\newtheorem{definition}{Definition}
\newtheorem{remark}{Remark}

\begin{document}

\hypersetup{
  breaklinks=true,
  allcolors=quantumviolet
}

\title{Avoided crossings, degeneracies and Berry phases in the spectrum of quantum noise of driven-dissipative bosonic systems}

\author{Giuseppe Patera}
\email{giuseppe.patera@univ-lille.fr}
\affiliation{Univ. Lille, CNRS, UMR 8523 - PhLAM - Physique des Lasers Atomes et Molécules, Lille, 59000, France}
\orcid{0000-0003-3812-6568
}
\author{Alessandro Pugliese}
\email{alessandro.pugliese@uniba.it}
\orcid{0000-0003-3324-0169
}
\affiliation{Dipartimento di Matematica, Università degli Studi di Bari Aldo Moro, Via Orabona 4, Bari, 70125, Italy}
\maketitle

\begin{abstract}
Avoided crossings are fundamental phenomena in quantum mechanics and photonics that originate from the interaction between coupled energy levels and have been extensively studied in linear dispersive dynamics. Their manifestation in open, driven-dissipative systems, however, where nonlinear dynamics of quantum fluctuations come into play, remains largely unexplored. In this work, we analyze the hitherto unexplored occurrence of avoided and genuine crossings in the spectrum of quantum noise. We demonstrate that avoided crossings arise naturally when a single parameter is varied, leading to hypersensitivity of the associated singular vectors and suggesting the presence of genuine crossings (diabolical points) in nearby systems. We show that these spectral features can be deliberately designed, highlighting the possibility of programming the quantum noise response of photonic systems. As a notable example, such control can be exploited to generate broad, flat-band squeezing spectra -- a desirable feature for enhancing degaussification protocols. Our analysis is based on a detailed study of the Analytic Bloch-Messiah Decomposition (ABMD), which we use to characterize the parameter-dependent behavior of singular values and their corresponding vectors. This study provides new insights into the structure of multimode quantum correlations and offers a theoretical framework for the experimental exploitation of complex quantum optical systems.
\end{abstract}

\section{Introduction}
Avoided crossings are ubiquitous phenomena in physics, particularly in quantum mechanics and condensed matter, where they provide a hallmark of mechanisms such as quantum interference and energy-level repulsion~\cite{vonNeumann1929,Landau1932,Zener1932}. They have been extensively studied in contexts such as molecular spectroscopy~\cite{Herzberg1963,Yarkony1996}, quantum dots~\cite{Krenner2005,Stinaff2006}, superconductors~\cite{Simmonds2004,Wallraff2004}, and semiconductor nanostructures~\cite{Weisbuch1992,Winkler_book}. 
In photonics, avoided crossings emerge from the interaction between distinct optical modes, leading to the characteristic splitting and mutual repulsion of their resonant frequencies. These effects occur in a wide variety of optical systems—including waveguides~\cite{Jansen2011,Young2021}, photonic crystals~\cite{PalamaruJohnson2001}, microresonators~\cite{Herr2014}, and coupled-cavity architectures~\cite{Boriskina2007,Ryu2009}—and can be precisely engineered through the control of coupling strengths and system parameters. This control enables functionalities ranging from mode conversion and enhanced light–matter coupling to spectral filtering and dispersion engineering~\cite{Joannopoulos_book,Fan2011,Li2018}.

Such avoided crossings are features of the linear energy spectrum and arise from number-conserving quadratic interactions (e.g., beam-splitter-like couplings). In other words, they originate from mode-hopping terms associated with linear dispersive dynamics. In nonlinear photonic platforms—such as Kerr microresonators—these linear avoided crossings reshape the dispersion landscape and can be exploited to enable or tailor phenomena such as dispersive-wave emission and soliton formation~\cite{Xue2015,Yang2016,Yi2017}. However, in those cases the avoided crossing remains a property of the underlying linear mode structure, rather than a dynamical feature of the nonlinear quantum evolution itself.

It is far from obvious that such spectral features persist once the full driven–dissipative dynamics—including nonlinear or parametric processes—is taken into account in the presence of quantum noise. In this regime, after linearization around the classical steady state, quantum fluctuations are governed by quantum Langevin equations generated by an effective quadratic Hamiltonian that includes pair-generation terms. The evolution of quantum fluctuations is governed by smooth, frequency-dependent conjugate-symplectic transformations—which we term \textit{$\omega$-symplectic} transformations for brevity. The appropriate tool for their analysis is the Analytic Bloch–Messiah Decomposition (ABMD), which provides a smooth factorization of these transformations. This framework was introduced in~\cite{Gouzien2020}, developing a general approach where the system's transfer function is decomposed in terms of frequency-dependent supermodes. These modes—referred to as \textit{morphing supermodes} and physically interpreted as \textit{interferometers with memory effect} (IME)~\cite{Dioum2024}—have singular values that represent the corresponding frequency-dependent squeezing levels. The ABMD framework has been instrumental in revealing \textit{hidden squeezing}~\cite{Gouzien2023} and in identifying an optimal measurement strategy for its detection based on IMEs. Remarkably, a conceptually similar method, termed \textit{frequency-dependent principal component analysis}, has recently been proposed to characterize neuronal dynamics in brain activity~\cite{Calvo2024}.

\subsection{Contributions} 
In this work, we provide an in-depth analysis of the dynamics in driven–dissipative bosonic systems and show the occurrence of \emph{avoided crossings} in the spectrum of quantum noise, where the squeezing spectra approach each other and then abruptly veer apart.

Avoided crossings are typically accompanied by sharp variations in the corresponding singular vectors, a behaviour that makes them highly sensitive to small perturbations. In contrast, true intersections of singular values -- referred to as \emph{degeneracies} in this work, as \emph{diabolical points} in the physics literature, and as \emph{conical intersections} in the chemistry literature -- are generally not expected to occur unless the number of parameters is increased to match the \emph{codimension}\,\footnote{\label{fn:codim}Roughly speaking, the codimension of a phenomenon refers to the number of parameters that must be varied for the phenomenon to be generically observed. For a rigorous treatment of this concept, we refer the reader to \cite{Hirsch1976}. Throughout this work, dimension and codimension will always be considered with respect to the field of real numbers.}\setcounter{codimfootnote}{\value{footnote}} of the degeneracy.

A proper understanding of these features requires a formalism capable of tracking both singular values and their associated unitary factors in a smooth and physically consistent way. This role is naturally fulfilled by the Analytic Bloch–Messiah Decomposition (ABMD). We provide a detailed description of its fundamental properties and highlight its relevance in the analysis of mode dynamics.

Another contribution of this work is to clarify that the term ABMD actually encompasses a family of decompositions, all sharing the same singular values but generally differing in their unitary factors. We then focus our attention on a specific decomposition, which we name \emph{\jmvbmd} -- or \emph{ABMD}, depending on the smoothness of the matrix-valued function. We present an algorithm for computing it numerically (caution: smoothly computing a BMD/ABMD is not just collecting BMDs at separate parameter values, see Remark \ref{remark:caution}), and describe how it can be used to reveal avoided crossings and locate nearby degeneracies by exploiting certain topological properties of the unitary factors. All these concepts are illustrated through concrete examples of realistic physical systems in which such behaviors emerge, thereby demonstrating their relevance in experimental settings.

An interesting physical implication of these avoided crossings is the possibility of exploiting them to spectrally engineer the quantum-noise properties of the associated singular vectors. As pointed out by Asavanant \textit{et al.} in~\cite{Asavanant2017}, the quality of de-Gaussification of continuous-variable (CV) squeezed states depends on the flatness of the squeezing spectrum. Achieving this flatness has so far required narrowband filters, which, however, necessarily reduce the amount of nonclassical correlations available for the preparation of non-Gaussian states. We show that avoided crossings can be exploited to tailor quantum states featuring a broad, nearly flat-band squeezing spectrum.

\section{The playground: linear quantum Langevin equations and their decoupling}\label{sec:quantum}
We consider the nonlinear driven-dissipative evolution of a system of $n$ boson modes in the interaction picture. A standard linearization around a stable classical steady solution allows to reduce the description in terms of an effective quadratic Hamiltonian:
\begin{equation}
\hat{H} = \hbar \sum_{j,\ell}G_{j,\ell} \hat{a}_{j}^{\dagger}\hat{a}_{\ell}+ \frac{\hbar}{2}\sum_{j,\ell}[F_{j,\ell} \hat{a}_{j}^{\dagger}\hat{a}^{\dagger}_{\ell} + \hc],
\label{eq:hamiltonian}
\end{equation}
where $F$ and $G$ are $n \times n$ complex matrices with $F$ being symmetric ($F=F^{\transp}$), describing pair-production processes 
and $G$ being Hermitian ($G= G^{\dagger}$), describing mode-hopping processes.

In a driven-dissipative context, the Hamiltonian~\eqref{eq:hamiltonian} generates a set of coupled linear quantum Langevin equations that, in the quadrature representation, is given by
\begin{align}
\frac{\mathrm{d}\hat{\mathbf{R}}(t)}{\mathrm{d}t}&=
(-\Gamma+\mathcal{M})\hat{\mathbf{R}}(t)+
\sqrt{2\Gamma}\,\hat{\mathbf{R}}_{\mathrm{in}}(t)
\label{eq:quantum_lang_r}
\end{align}
where $\hat{\bm{R}}(t)=[\hat{x}_1(t),\ldots,\hat{x}_n(t),\hat{y}_1(t),\ldots,\hat{y}_n(t)]^\transp$ is the column vector of the amplitude and phase quadratures of the intracavity modes, $\hat{x}_j=(\hat{a}_j^\dag+\hat{a}_j)/\sqrt{2}$ and $\hat{y}_j=i(\hat{a}_j^\dag-\hat{a}_j)/\sqrt{2}$. The real $2n\times 2n$ diagonal matrix $\Gamma=\mathrm{diag}(\gamma_1,\ldots,\gamma_n,\gamma_1,\ldots,\gamma_n)$ accounts for the cavity damping rates that in general can be mode-dependent and that can result from multiple sources of losses or couplings with the external environment. The real $2n\times 2n$ mode interaction matrix $\mathcal{M}$
\begin{equation}
\mathcal{M}=
\left[
\begin{array}{cc}
\mathrm{Im}\left[G+F\right] & \mathrm{Re}\left[G-F\right]
\\
-\mathrm{Re}\left[G+F\right] & -\mathrm{Im}\left[G+F\right]^\transp
\end{array}
\right]
\label{eq:M_cali}
\end{equation}
encapsulates all mode interactions (mode-hopping and pair-production).

The quadratures $\hat{\mathbf{R}}_{\mathrm{out}}(t)$ of the system's output can be obtained through the input-output relations $\hat{\mathbf{R}}_{\mathrm{in}}(t)+\hat{\mathbf{R}}_{\mathrm{out}}(t)=\sqrt{2\Gamma}\,\hat{\mathbf{R}}(t)$~\cite{GardinerZoller}.
In the frequency domain, they are expressed in terms of the input quadratures and the transfer function $S(\omega)$
\begin{equation}
\hat{\mathbf{R}}_{\mathrm{out}}(\omega)= S(\omega)\hat{\mathbf{R}}_{\mathrm{in}}(\omega), \label{eq:Rout}
\end{equation}
where $S(\omega)$ is a complex matrix-valued function of the continuous parameter $\omega$ given by the expression
\begin{equation}
S(\omega)=\sqrt{2\Gamma}\left(i\omega I_{2n}+\Gamma-\mathcal{M}\right)^{-1}\sqrt{2\Gamma}-I_{2n}
\label{S}
\end{equation}
with $I_{2n}$ the $2n\times 2n$ identity matrix. The input spectral quadrature operators satisfy the commutation rule $\left[\hat{\bm{R}}_{\mathrm{in}}(\omega),\hat{\bm{R}}_{\mathrm{in}}^{\transp}(\omega')\right]=i\Omega\delta(\omega+\omega')$, 
where $\Omega=\begin{bmatrix} 0 & I_n \\  -I_n & 0\end{bmatrix}$ is the $n$-mode symplectic form and $I_n$ is the $n\times n$ identity matrix. The output quadratures $\hat{\bm{R}}_{\mathrm{out}}(\omega)$ are the Fourier transform of \textit{bona fide} boson quadrature operators in time domain, since $S(\omega)$ has the two following properties~\cite{Gouzien2020}:
\begin{property}
$S$ is conjugate symmetric: $S(-\omega)=S^*(\omega)$, for all $\omega$. This assures the reality of $S$ in time domain.
\end{property}
\begin{property}\label{property:omega-symp}
$S$ is ``$\omega$-symplectic'' which stands for a smooth (possibly analytic) matrix-valued function that is conjugate symplectic for all $\omega$, see Definition \ref{def:conj_simpl}. This follows from the fact that $\mathcal{M}$ is an Hamiltonian matrix, i.e. ${(\Omega\,\mathcal{M})}^{\transp}=\Omega\,\mathcal{M}$, and $\Gamma$ is a skew-Hamiltonian, i.e. ${(\Omega\,\Gamma)}^{\transp}=-\Omega\,\Gamma$.
\end{property}
\noindent
We recall the following definition:
\begin{definition}\label{def:conj_simpl}
A matrix $A\in\C^\tnxtn$ is said to be \emph{conjugate symplectic} if it satisfies $A\Omega A\da=\Omega$.
The conjugate symplectic group is noted as $\Sp$.
\end{definition}
\noindent
We also introduce the following definition throughout the paper:
\begin{definition}
    A matrix-valued function of $\bm{\lambda}=(\lambda_1,\ldots,\lambda_p)\in\mathbb{R}^{p}$ is said to be ``$\bm{\lambda}$-symplectic'' if it is smooth (possibly analytic) and conjugate symplectic for all values of $\bm{\lambda}$.
\end{definition}
While eq.~\eqref{eq:Rout} formally expresses the solution of~\eqref{eq:quantum_lang_r}, its highly coupled form makes it difficult to exploit it in order to characterize the system's dynamics and the corresponding quantum properties. A better approach involves a set of normal modes (a.k.a. supermodes) that decouple the system's dissipative dynamics and map the CV multimode entangled state into a collection of statistically uncorrelated (anti-)squeezed states~\cite{Patera2010,Roslund2014}. 

Brute force diagonalization does not always lead to supermodes that correspond to physical observables. In first instance, because $\mathcal{M}$ may be non diagonalizable; in second instance, because one needs $\mathcal{M}$ to be diagonalizable via an orthogonal and symplectic matrix $O$.
\begin{align}
\mathcal{M}=O \Lambda O^\transp.
\end{align}
This ``symplectic diagonalization" exists only for $\mathcal{M}$ symmetric~\cite{delaCruz2016,Fassbender2005,TheseElie}, which is equivalent to having $G=0$.

As demonstrated in Ref.~\cite{Gouzien2020}, in the general case of a quadratic Hamiltonian, the modes that decouple~\eqref{S} are necessarily ``morphing''. 
Morphing supermodes are a generalization of standard (static) supermodes and are expressed as linear combinations of the initial system modes with frequency-dependent coefficients that vary with $\omega$. They are obtained by performing the analytic Bloch-Messiah decomposition (ABMD) of the transfer function
\begin{align}
S(\omega)=U(\omega)D(\omega)V^\dagger(\omega).
\label{ABMD}
\end{align}
In this expression, $U(\omega)$ and $V(\omega)$ are unitary and $\omega$-symplectic functions that characterize the supermodes structure.
In particular, the input and output quadratures of morphing supermodes are respectively given by
\begin{align}
\hat{\bm{R}}^{(s)}_{\mathrm{in}}(\omega) &= V^\dagger(\omega)\hat{\bm{R}}_{\mathrm{in}}(\omega),
\label{morsup in}
\\
\hat{\bm{R}}^{(s)}_{\mathrm{out}}(\omega) &= U^\dagger(\omega)\hat{\bm{R}}_{\mathrm{out}}(\omega).
\label{morsup out}
\end{align}
We note that, since the vacuum state is unitary invariant, the input supermodes can be disregarded when the input is assumed to be in this state. In this case, assuming an arrangement of the analytic singular values 
such that $D(\omega)= \diag(d_1(\omega),\ldots,d_n(\omega)|\,d_1^{-1}(\omega),\ldots,d_n^{-1}(\omega))$, with $d_j(\omega)\ge1$ for $j=1,\ldots,n$ and for all $\omega$, the quadratures for $j=1,\ldots,n$ are amplified
\begin{align}
\hat{R}^{(s)}_{\mathrm{out},j}(\omega)&=
d_j(\omega)\hat{R}_{\mathrm{in},j}(\omega)
\end{align}
and the output quadratures for $j=n+1,\ldots,2n$ are squeezed
\begin{align}
\hat{R}^{(s)}_{\mathrm{out},j}(\omega)&=
d_j^{-1}(\omega)\hat{R}_{\mathrm{in},j}(\omega).
\end{align}
The corresponding quantum state, since it is produced by the action of a quadratic Hamiltonian on an input Gaussian state (the vacuum), will also be a Gaussian state whose quantum properties are fully captured, up to a displacement in the phase-space, by the spectral covariance matrix
\begin{align}
\sigma_{\mathrm{out}}(\omega)&=
\frac{1}{2}\left\langle 
\hat{\mathbf{R}}_{\mathrm{out}}(\omega)
\hat{\mathbf{R}}_{\mathrm{out}}^{\transp}(-\omega)
+
\left(\hat{\mathbf{R}}_{\mathrm{out}}(-\omega)
\hat{\mathbf{R}}_{\mathrm{out}}^{\transp}(\omega)\right)^{\transp}\right\rangle.
\label{eq:covar_spect}   
\end{align}
In the supermodes basis, $\sigma_{\mathrm{out}}^{(s)}(\omega)=\mathrm{diag}(d_1^2(\omega),\ldots,d_n^{2}(\omega)|d_1^{-2}(\omega),\ldots,d_n^{-2}(\omega))$ is diagonal with
anti-squeezed entries for $j=1,\ldots,n$ and squeezed entries for $j=n+1,\ldots,2n$. Therefore the frequency-dependent singular values of the system's transfer function $S(\omega)$ determine the squeezing and anti-squeezing spectra of their corresponding supermodes.

\section{Smooth decompositions}\label{sec:LinAlg}

\subsection{Static SVD}
It is well known (e.g., see \cite{Horn_Johnson_2012}) that any matrix $A\in\C^{m\times n}$ admits a Singular Value Decomposition (SVD): 
\begin{equation*}
    A=U\Sigma V\da,
\end{equation*}
where
\begin{itemize}
    \item[-] $\Sigma=\diag(\sigma_1,\sigma_2,\ldots,\sigma_q)\in\R^{m\times n}$, $q=\min\{m,n\}$, $\sigma_1\ge\sigma_2\ge\ldots\ge\sigma_q\ge 0$;
    \item[-] $U=\mat{u_1, u_2, \ldots, u_m}\in\C^{m\times m}$ and $V=\mat{v_1, v_2, \ldots, v_n}\in\C^{n\times n}$ are unitary matrices (partitioned by columns).
\end{itemize}

The values $\sigma_j$ are called the singular values of $A$, and the corresponding columns of $U$ and $V$ are called its left and right singular vectors, respectively. 
If $\sigma_j=\sigma_{j+1}$ for some $j$, we say that $A$ is \emph{degenerate}. In this work, the word \emph{degeneracy} will always refer to situations where some singular values coincide, and our attention is restricted to square matrices ($m = n$).

Our focus is on the SVD of parameter-dependent matrices, in particular on typical phenomena that concern their singular values and singular vectors. In this perspective, degeneracy of singular values and its impact on the non-uniqueness of the SVD factors play a critical role, and we elaborate more on this below. 

The degree of non-uniqueness of the SVD depends on the number of distinct singular values. If the singular values of $A\in\C^\nxn$ are all distinct, its singular vectors are unique (only) up to an arbitrary phase factor. That is, if $A=U\Sigma V\da$ is an SVD of $A$ with $\sigma_1>\sigma_2>\ldots>\sigma_n$, then any other SVD must be of the form $A={\widetilde U} \Sigma {\widetilde V}\da$, where $\widetilde U=U\Theta$ and $\widetilde V=V\Theta$, with $\Theta=\diag\big(e^{i\theta_1},e^{i\theta_2},\ldots,e^{i\theta_n}\big)$ for some $\theta_j\in\R$, $j=1,\ldots,n$. However, if $A\in\C^\nxn$ has fewer than $n$ distinct singular values, then the unitary factors of its SVD gain additional degrees of freedom. For instance, suppose that $A=U\Sigma V\da$ is an SVD of $A$ with $\sigma_1=\sigma_2>\sigma_3>\ldots>\sigma_n$. Then, $A={\widetilde U} \Sigma {\widetilde V}\da$ is an SVD of $A$ if and only if $\widetilde U=U\mat{W & 0 \\ 0 & \Theta}$ and $\widetilde V=V\mat{W & 0 \\ 0 & \Theta}$, where $W$ is a $2 \times 2$ complex unitary matrix and $\Theta=\diag\big(e^{i\theta_3},e^{i\theta_4},\ldots,e^{i\theta_n}\big)$ with $\theta_j\in\R$, $j=3,\ldots,n$. If the number of distinct singular values drops below $n-1$, the freedom in choosing the unitary factors of the SVD increases further.
\smallskip

\subsection{Dynamic SVD}
The degeneracy of singular values becomes even more significant in the study of the SVD of parameter-dependent matrices. Consider a complex matrix-valued function 
\begin{equation*}
    A:\blam=(\lam_1,\lam_2,\ldots,\lam_p)\in\R^p\mapsto A(\blam)\in\C^\nxn\,,
\end{equation*}
which has $k\ge1$ continuous derivatives (or is possibly analytic) with respect to $\blam$. Throughout this work, we refer to such matrix functions as \emph{smooth}. If $A(\blam)$ has distinct non-zero singular values for all $\blam$, then it is known (see \cite{hsieh_sibuya_1971}) that it has an SVD
\begin{equation*}
    A(\blam)=U(\blam)\Sigma(\blam) V(\blam)\da,
\end{equation*}
where the factors are as smooth as $A$, regardless of the number of parameters $p$. The situation changes drastically if $A(\blam)$ is degenerate for some value of $\blam$. Typically, near a point $\blam \in \R^p$, $p \geq 2$, where $A(\blam)$ is degenerate, its ordered singular values are merely continuous functions of $\blam$, while the singular vectors corresponding to degenerate singular values will lose continuity (e.g., see \cite{kato1980}). 
For one-parameter matrix functions ($p=1$), the behavior differs depending on whether the function is analytic or $\cC^k$. An analytic complex matrix-valued function of one parameter $\lam\in\R$ always has analytic SVD factors, even when singular values are degenerate (again, see \cite{kato1980}). For $\cC^k$ matrix functions, $k\ge 1$, the singular vectors can lose smoothness (or even continuity) at points of degeneracy, while the singular values can still retain their smoothness. We refer to \cite{Dieci1999} for a comprehensive study of the smoothness of matrix factorizations that depend on one (real) parameter.

\subsection{Codimension of degeneracies}
The analysis above highlights that the smoothness of the factors of the SVD is intimately related to the degeneracy of the singular values. This motivates the following question: how likely is it for a matrix $A\in\C^\nxn$ to have degenerate singular values? The most appropriate way to answer this question is to look at the dimension of the set of degenerate matrices:
\begin{multline*}
    \cS_n=\{A\in\C^\nxn: A \text{ has fewer than } n \text{ distinct} \\ \text{singular values}\}.
\end{multline*}
It is known that $\cS_n$, as a subset of $\C^\nxn$, has real dimension $2n^2-3$ (see \cite{Dieci1999}), where $2n^2$ is just the real dimension of $\C^\nxn$. Hence, the real codimension of $\cS_n$ is $3$ \footnotemark[\value{codimfootnote}]. To see why, it is insightful to look at the $2\times 2$ case, where $\cS_2$ is just the set of positive scalar multiples of $2\times 2$ unitary matrices:
\begin{equation*}
    \cS_2=\{\sigma W, \text{ with } \sigma\ge0 \text{ and } W\in\C^{2 \times 2} \text{ unitary }\},
\end{equation*}
that has dimension 5, which is 3 less than the dimension of $\C^{2 \times 2}$.

Generically, a one-parameter or two-parameters family of $\nxn$ complex matrices will not intersect $\cS_n$. This has the following important consequences:
\begin{enumerate}
    \item A generic smooth complex matrix-valued function depending on \textbf{one or two} parameters has distinct singular values, and hence has a
    smooth singular value decomposition (see \cite{Dieci1999});
    \item A generic smooth complex matrix-valued function depending on \textbf{three} parameters is expected to be degenerate at isolated points in parameters' space.
\end{enumerate}

The codimension of the degenerate set $\cS_n$ may decrease due to additional symmetries in the matrices. For example, the codimension is 2 for real symmetric matrices (see \cite{DiPu_realSVD}) and complex symmetric matrices (see \cite{DiPaPu_takagi}). Another instance of reduced codimension will be encountered later, in Section \ref{sec:phys_ex}.

\smallskip

\subsection{Static and dynamic Bloch-Messiah decomposition}
From this point onward, we focus on conjugate symplectic matrices (see Definition \eqref{def:conj_simpl}). For consistency with the notation introduced in Section \ref{sec:quantum}, we denote the singular values of a $\tnxtn$ conjugate symplectic matrix by $d_1,\ldots,d_n, d_1\inv,\ldots, d_n\inv$, where $d_1\ge\ldots\ge d_n\ge 1$, and denote by $D$ the diagonal matrix of its singular values:
\begin{equation}\label{eq:D_expression}
    D=\mat{D_1 & 0 \\ 0 & D_1\inv}=\diag(D_1,D_1\inv),
\end{equation}
where $D_1=\diag(d_1,\ldots,d_n)$.

First, we observe that every conjugate symplectic matrix admits an SVD in which the unitary factors preserve the symplectic structure.

\begin{thm}[\cite{XU20031}]\label{thm:sympSVD_Xu} Let $S\in\C^\tnxtn$ be conjugate symplectic. Then, there exist $U, V\in\C^\tnxtn$ unitary conjugate symplectic and $D_1=\diag(d_1,\ldots,d_n)$ with $d_1\ge\ldots
\ge d_n\ge1$ such that 
 \begin{equation}\label{eq:sympSVD_Xu}
     S=UDV^\dag,
 \end{equation}
 where $D$ is as in \eqref{eq:D_expression}.
\end{thm}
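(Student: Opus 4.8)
The plan is to reduce the statement, via a polar decomposition, to an ordinary SVD of a single $n\times n$ block. First I would note that $S\Omega S\da=\Omega$ forces $S\da\Omega S=\Omega$ as well: from $S\Omega S\da=\Omega$ one gets $S\da=\Omega\inv S\inv\Omega$, and hence $S\da\Omega S=\Omega\inv S\inv\Omega^2 S=\Omega$ using $\Omega^2=-I$ and $\Omega\inv=-\Omega$. Thus $M:=S\da S$ is Hermitian, positive definite and conjugate symplectic, so $M\Omega M=\Omega$, equivalently $\Omega M\Omega\inv=M\inv$. Taking the positive definite square root $P:=M^{1/2}$ and using that conjugation by the (real orthogonal, hence unitary) matrix $\Omega$ commutes with the continuous functional calculus, one gets $\Omega P\Omega\inv=P\inv$, so $P$ is again conjugate symplectic and $Q:=SP\inv$ is unitary and conjugate symplectic with $S=QP$. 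Everything then reduces to the claim that a positive definite conjugate symplectic $P$ can be written $P=VDV\da$ with $V$ unitary conjugate symplectic and $D=\diag(D_1,D_1\inv)$, $D_1=\diag(d_1,\ldots,d_n)$, $d_1\ge\cdots\ge d_n\ge1$: given this, $U:=QV$ is unitary conjugate symplectic, $S=UDV\da$, and since $P=(S\da S)^{1/2}$ has the singular values of $S$ as eigenvalues, $D$ is exactly the required diagonal matrix.

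To establish the claim I would use that the unitary conjugate symplectic matrices are precisely the unitaries commuting with $\Omega$, and pass to a basis that diagonalizes $\Omega$. Fix a unitary $C$ with $C\Omega C\da=i\eta$, $\eta:=\diag(I_n,-I_n)$ — one may take $C=\tfrac1{\sqrt2}\mat{I & -iI\\ -iI & I}$. Then $\tilde P:=CPC\da$ is Hermitian, positive definite, and satisfies $\tilde P\eta\tilde P=\eta$ (since $C\Omega C\da=i\eta$ turns $P\Omega P=\Omega$ into this). Partitioning $\tilde P=\mat{A & B\\ B\da & A'}$ into $n\times n$ blocks and expanding $\tilde P\eta\tilde P=\eta$ yields $A^2=I+BB\da$, $A'^2=I+B\da B$, $AB=BA'$; inserting an ordinary SVD $B=V_1\Sigma V_2\da$ and using that $A,A'$ are positive definite forces $A=V_1\sqrt{I+\Sigma^2}\,V_1\da$ and $A'=V_2\sqrt{I+\Sigma^2}\,V_2\da$, whence $\tilde P=\diag(V_1,V_2)\,N\,\diag(V_1,V_2)\da$ with $N:=\mat{\sqrt{I+\Sigma^2} & \Sigma\\ \Sigma & \sqrt{I+\Sigma^2}}$. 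Transforming back to the original basis, $\diag(V_1,V_2)$ commutes with $\eta$, so $V_0:=C\da\diag(V_1,V_2)C$ is unitary conjugate symplectic; the middle factor $N$ commutes with $C$ (direct computation) and is therefore unchanged; hence $P=V_0\,N\,V_0\da$.

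Finally I would diagonalize $N$ by a unitary conjugate symplectic matrix. Each $2\times2$ block $\mat{\sqrt{1+\sigma^2} & \sigma\\ \sigma & \sqrt{1+\sigma^2}}$ has eigenvalues $\lambda^{\pm1}$ with $\lambda:=\sqrt{1+\sigma^2}+\sigma\ge1$ and eigenvectors along the $\pm45^\circ$ axes, so the symplectic ``rotation'' $R:=\tfrac1{\sqrt2}\mat{I & -I\\ I & I}$ (orthogonal and commuting with $\Omega$) satisfies $R^\transp N R=\diag(\Lambda,\Lambda\inv)$ with $\Lambda:=\sqrt{I+\Sigma^2}+\Sigma\succeq I$. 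Absorbing $R$, and then a mode-permutation $\diag(P_\pi,P_\pi)$ sorting the diagonal of $\Lambda$ nonincreasingly (again orthogonal and commuting with $\Omega$), into $V_0$ produces $V$ with $P=VDV\da$, $D=\diag(D_1,D_1\inv)$, $d_1\ge\cdots\ge d_n\ge1$, as claimed, and therefore $S=UDV\da$.

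The main obstacle I anticipate is conceptual rather than computational: an ordinary unitary diagonalization of $P$ yields unitary factors that need not commute with $\Omega$ — a failure already visible near the degeneracy $d_j=1$, where a reciprocal pair of eigenvalues of $P$ collides and the freedom in the degenerate eigenspace must be spent to restore the symplectic structure. The detour through the pseudo-unitary picture (where $C$ conjugates $\Sp$ onto $U(n,n)$) is precisely what makes this transparent: it packages all of that eigenvector freedom into the $n\times n$ SVD of the off-diagonal block $B$, after which only the harmless symplectic rotation $R$ and a permutation remain. Along the way one should check the routine facts used — that $\diag(D_1,D_1\inv)$, $R$, $\diag(P_\pi,P_\pi)$ and products of conjugate symplectic matrices are conjugate symplectic, and that $\Omega$ being real orthogonal makes conjugation by it unitary, hence compatible with Hermiticity, positivity, and square roots.
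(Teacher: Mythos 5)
The paper does not actually prove Theorem~\ref{thm:sympSVD_Xu}: it is imported verbatim from the cited reference \cite{XU20031}, so there is no in-paper argument to compare against. Your proof is correct and self-contained, and it follows the classical route for the Bloch--Messiah/Euler decomposition rather than Xu's constructive SVD-like reduction for general $2n\times 2n$ matrices with respect to a skew form. Every step checks out: $S\Omega S\da=\Omega$ does imply $S\da\Omega S=\Omega$; $M=S\da S$ is positive definite with $\Omega M\Omega\inv=M\inv$, and since conjugation by the unitary $\Omega$ commutes with the functional calculus, $P=M^{1/2}$ is again conjugate symplectic, giving the structured polar decomposition $S=QP$. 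The Cayley-type conjugation by $C$ sending $\Omega$ to $i\eta$ correctly converts $P\Omega P=\Omega$ into $\tilde P\eta\tilde P=\eta$, and the block identities $A^2=I+BB\da$, $A'^2=I+B\da B$, $AB=BA'$ follow by direct expansion; positivity of the diagonal blocks (as principal submatrices of $\tilde P\succ 0$) plus uniqueness of the positive square root then pins down $A$ and $A'$ from the SVD of $B$ exactly as you claim, and the relation $AB=BA'$ is automatically consistent because diagonal matrices commute. The facts that $N$ commutes with $C$, that $R$ and $\diag(P_\pi,P_\pi)$ are real orthogonal and commute with $\Omega$, and that $R^\transp NR=\diag(\Lambda,\Lambda\inv)$ with $\Lambda=\sqrt{I+\Sigma^2}+\Sigma\succeq I$ (so that the two diagonal blocks are genuinely reciprocal) are all verified by the computations you indicate. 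Your closing remark identifies the right conceptual point: a naive spectral decomposition of $P$ would not respect the symplectic structure precisely where reciprocal eigenvalue pairs collide at $d_j=1$, and routing the eigenvector freedom through the $n\times n$ SVD of the off-diagonal block $B$ is what resolves this. What your approach buys over simply citing \cite{XU20031} is an elementary, fully explicit construction of the unitary symplectic factors, which is also consistent with the block form \eqref{eq:conjSympUnitaryForm} used later in the paper.
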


The factorization in \eqref{eq:sympSVD_Xu} is known as Bloch-Messiah decomposition (BDM). 
We recall that a conjugate symplectic unitary matrix $W\in\C^\tnxtn$ must have the form
\begin{equation}\label{eq:conjSympUnitaryForm}
    W=\mat{W_{11} & W_{12} \\ -W_{12} & W_{11}},
\end{equation}
where each block is $\nxn$, e.g. see \cite[p. 14]{paige1981schur}; it is also easy to see that a conjugate symplectic real diagonal matrix $D\in\R^\tnxtn$ must have the form $D=\diag\left(D_{1},D_1\inv\right)$.

In general, given a conjugate symplectic matrix, standard software for the computation of the SVD does not return a BDM. Below we describe a simple algorithm to obtain the BMD of a conjugate symplectic matrix $S$ from \emph{any} given SVD of $S$. We specialize it to the case where $S$ has distinct singular values, as this is both generic and sufficient for our scope.

\begin{algorithm}[H]
\algsetup{linenodelimiter=.}
\caption{BMD from any SVD}
\label{algo:Bloch-Messiah}
\begin{algorithmic}[1]
\REQUIRE $S\in\C^\tnxtn$ conjugate symplectic with distinct singular values.
\smallskip
\ENSURE $U, V\in\C^\tnxtn$ and $D_1\in\R^\nxn$ as in Theorem \ref{thm:sympSVD_Xu}.
\medskip
\STATE Let $S=U\Sigma V^\dag$ be an SVD of $S$ with $U,V\in \C^\tnxtn$ unitary, and singular values arranged in strictly decreasing order along the diagonal of $\Sigma\in\R^\tnxtn$.
\STATE Let $\Pi$ be the $\tnxtn$ permutation matrix for which $\Pi^T\Sigma\Pi=\mat{D_1 & 0 \\ 0 & D_1\inv}$ with $(D_1)_{jj}=(\Sigma)_{jj}$, $j=1,\ldots,n$, and update the unitary factors as follows:
\begin{equation*}
U\leftarrow U\Pi,\quad V\leftarrow V\Pi.
\end{equation*}
\STATE Partition $U$ in $\nxn$ blocks, $U=\mat{U_{11} & U_{12} \\ U_{21} & U_{22}}$, and compute real numbers $\theta_1,\ldots,\theta_n$ such that $\mat{U_{11} \\ -U_{21}}=\mat{U_{22} \\ U_{12}}\diag\big(e^{i\theta_1},\ldots,e^{i\theta_n}\big)$.
\STATE Set $\Theta=\diag\big(e^{-i\theta_1/2},\ldots,e^{-i\theta_n/2},e^{i\theta_1/2},\ldots,e^{i\theta_n/2}\big)$ and update the unitary factors as follows: 
\begin{equation*}
    U\leftarrow U\Theta,\quad V\leftarrow V\Theta.
\end{equation*}
\end{algorithmic}
\end{algorithm}

The existence of the values $\theta_j \in \R$, $j = 1, \ldots, n$, needed in step 3 of the algorithm, follows directly from Theorem \ref{thm:sympSVD_Xu} and from the degree of uniqueness of the BMD for a matrix with distinct singular values. In practice, these values are computed by taking the (principal) logarithm of the diagonal entries of the matrix  
$$
\begin{bmatrix} U_{22}^\dag & U_{12}^\dag \end{bmatrix} 
\begin{bmatrix} U_{11} \\ -U_{21} \end{bmatrix}.
$$

It is important to highlight that, if the input matrix $S$ is a smooth function of a real parameter, Algorithm \ref{algo:Bloch-Messiah} can be arranged so as to produce a smooth BMD. This result holds because each step of the algorithm can be performed in a way that preserves smoothness. In particular, step 1 must yield a smooth SVD, and, in step 3, special care is required: to retain smoothness, switching branches of the logarithm may be necessary when computing the $\theta_j$’s.

We formalize this observation in the following theorem.

\begin{thm}\label{thm:smoothBlochMessiahDec} Any smooth $\omega$-symplectic matrix function $S(\omega)$, $\omega\in\R$, having distinct singular values for all $\omega$, admits a smooth Bloch-Messiah decomposition
\begin{equation}\label{eq:smoothBMD}
    S(\omega)=U(\omega)D(\omega)V^\dag(\omega)
\end{equation}
where all the factors are $\omega$-symplectic.
\end{thm}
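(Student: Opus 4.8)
The plan is to run Algorithm~\ref{algo:Bloch-Messiah} with the constant matrix $S$ replaced by the function $S(\omega)$ and to check that each of its four steps can be performed while preserving smoothness in $\omega$.

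\textbf{Steps 1--2 (smooth SVD, constant permutation).} Since $S(\omega)$ is conjugate symplectic with $2n$ distinct singular values for every $\omega$, one has $d_j(\omega)\ne d_j(\omega)\inv$ for each $j$, hence $d_n(\omega)>1$; the decreasing rearrangement of the full $2n$-tuple of singular values is therefore $d_1(\omega)>\cdots>d_n(\omega)>d_n(\omega)\inv>\cdots>d_1(\omega)\inv$, and, being a $2n$-tuple of continuous, never-colliding real functions on the connected set $\R$, it has an $\omega$-independent order type. I would then invoke the smooth-SVD theorem for matrix functions with distinct nonzero singular values (\cite{hsieh_sibuya_1971}, as quoted above) to obtain an SVD $S(\omega)=\widehat U(\omega)\widehat\Sigma(\omega)\widehat V(\omega)\da$ whose factors are as smooth as $S$, and compose it with the ($\omega$-independent) permutation that rearranges $\widehat\Sigma$ into the block form $D=\diag(D_1,D_1\inv)$. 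This yields smooth unitary $U(\omega),V(\omega)$ with $S(\omega)=U(\omega)D(\omega)V(\omega)\da$; moreover $D(\omega)$ is already $\omega$-symplectic, being real, diagonal, of the stated block form (hence conjugate symplectic), with smooth entries $d_j(\omega)$.

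\textbf{Steps 3--4 (the symplectic phase correction --- the crux).} Fix $\omega$. Theorem~\ref{thm:sympSVD_Xu} supplies a BMD of $S(\omega)$; since its singular values are distinct, the unitary factors of that BMD differ from $U(\omega),V(\omega)$ by a common diagonal phase matrix (as recalled above), and imposing the form~\eqref{eq:conjSympUnitaryForm} on the BMD factor forces the step-3 relation $\mat{U_{11}(\omega)\\-U_{21}(\omega)}=\mat{U_{22}(\omega)\\U_{12}(\omega)}\diag\big(e^{i\theta_1(\omega)},\ldots,e^{i\theta_n(\omega)}\big)$ to be solvable. Multiplying it on the left by $\mat{U_{22}(\omega)\da & U_{12}(\omega)\da}$ and using the block identity $U_{22}\da U_{22}+U_{12}\da U_{12}=I$ (from $U\da U=I$) exhibits the unique solution $\diag\big(e^{i\theta_j(\omega)}\big)=U_{22}(\omega)\da U_{11}(\omega)-U_{12}(\omega)\da U_{21}(\omega)$. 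Hence $\omega\mapsto\diag(e^{i\theta_j(\omega)})$ is a smooth, diagonal, unitary-valued matrix function, so each scalar $\omega\mapsto e^{i\theta_j(\omega)}$ is a smooth map $\R\to S^1$. The main obstacle is to extract the half-angles smoothly; this is where one may need to ``switch branches of the logarithm''. It works because the parameter domain $\R$ is simply connected: every smooth (resp.\ analytic) map $\R\to S^1$ lifts to a smooth (resp.\ analytic) real-valued $\theta_j(\omega)$, whence $\Theta(\omega)=\diag\big(e^{-i\theta_1(\omega)/2},\ldots,e^{-i\theta_n(\omega)/2},e^{i\theta_1(\omega)/2},\ldots,e^{i\theta_n(\omega)/2}\big)$ is smooth.

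\textbf{Conclusion.} Finally I would set $U(\omega)\leftarrow U(\omega)\Theta(\omega)$ and $V(\omega)\leftarrow V(\omega)\Theta(\omega)$: these remain smooth, and since $\Theta(\omega)$ is diagonal and unitary it commutes with $D(\omega)$ and cancels in $U\Theta\,D\,(V\Theta)\da$, so the product still equals $S(\omega)$. By the step-3 relation the updated $U(\omega)$ has the block form~\eqref{eq:conjSympUnitaryForm}, i.e.\ it commutes with $\Omega$; being also unitary, $U\Omega U\da=\Omega UU\da=\Omega$, so $U(\omega)$ is conjugate symplectic. Then from $S\Omega S\da=\Omega$, $U\Omega U\da=\Omega$ and the elementary block identity $D\Omega D=\Omega$, substituting $S=UDV\da$ and cancelling the outer unitary $U$ gives $DV\da\Omega VD=\Omega$, hence $V\da\Omega V=D\inv\Omega D\inv=\Omega$, so $V(\omega)$ is conjugate symplectic as well. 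All three factors are then smooth and conjugate symplectic for every $\omega$, which is the assertion. Apart from the global smooth choice of the half-angles $\theta_j(\omega)/2$ in step~4, every step is either a constant permutation, the cited smooth-SVD result, or routine symplectic bookkeeping.
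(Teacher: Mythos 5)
Your proposal is correct and follows exactly the route the paper itself takes: it runs Algorithm~\ref{algo:Bloch-Messiah} parametrically, invoking the smooth-SVD result of \cite{hsieh_sibuya_1971} for step~1 and a global smooth choice of logarithm branch (possible since $\R$ is simply connected) for the half-angles in steps~3--4, which is precisely the ``switching branches of the logarithm'' the paper alludes to. The only difference is one of completeness: you spell out the details (constant permutation from non-colliding singular values, the explicit formula $\diag(e^{i\theta_j})=U_{22}\da U_{11}-U_{12}\da U_{21}$, and the verification that $V$ inherits conjugate symplecticity from $S$, $U$ and $D$) that the paper leaves as an observation.
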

If $S$ is analytic, then the decomposition of Theorem~\ref{thm:smoothBlochMessiahDec} can be called ABMD as in ~\cite{Gouzien2020}.

It is worth noting that other choices are possible in step 4 of Algorithm \ref{algo:Bloch-Messiah}. In fact, as discussed earlier, the decomposition in \eqref{eq:smoothBMD} is inherently non-unique. Given a smooth Bloch-Messiah decomposition as in \eqref{eq:smoothBMD}, any decomposition
$$
    S(\omega)=\widetilde{U}(\omega)D(\omega)\widetilde{V}^\dag(\omega)
$$
is also a smooth Bloch-Messiah decomposition of $S(\omega)$ as long as we take
$$
\widetilde{U}(\omega) = U(\omega)\Theta(\omega), \quad \widetilde{V}(\omega) = V(\omega)\Theta(\omega),
$$
where $\Theta(\omega)$ is a diagonal \emph{phase matrix}:
\begin{equation}\label{eq:phase_matrix}
\Theta(\omega) = \diag\left(e^{i\theta_1(\omega)}, \ldots, e^{i\theta_n(\omega)}, e^{i\theta_1(\omega)}, \ldots, e^{i\theta_n(\omega)}\right)
\end{equation}
with each $\theta_j(\omega)$ being an arbitrary smooth real-valued function of $\omega\in\R$.

Numerically computing a smooth BDM in practice requires resolving its inherent non-uniqueness. Notably, the issue of selecting an appropriate smooth decomposition for a one-parameter matrix-valued function is not specific to our situation. For instance, it also arises in the eigendecomposition of Hermitian matrix-valued functions with distinct eigenvalues, where it is standard practice to adopt the so-called \emph{Minimum Variation Decomposition} (MVD), originally introduced in \cite{BunseGerstner1991}. This is closely linked to the definition of the \emph{Berry phase} around closed loops, as introduced in \cite{berry1984quantal} and further discussed in \cite{DIPU_LAA2012}.  

In this work, a key consideration in resolving the non-uniqueness of the decomposition is to do so in a way that is conducive to the detection of parameter values where singular values become degenerate. With this goal in mind, we introduce the following definition, where we adopt the approach recently proposed in \cite{DIPU_LAA2024}.

\begin{definition} Let $S(\omega)$, $\omega\in\R$, be a $\omega$-symplectic matrix function (see Property \ref{property:omega-symp}) having distinct singular values for all $\omega$, and consider a smooth Bloch-Messiah decomposition of $S$ as in Theorem \ref{thm:smoothBlochMessiahDec}. We will call \eqref{eq:smoothBMD} a \emph{\jmvbmd} over the interval $[0,\tau]$ if the pair $(U,V)$ minimizes the quantity
\begin{equation}\label{eq:joint_MVD_integral}
\int_0^\tau\sqrt{\norm{\dot U(\omega)}^2_{\mathrm{F}}+\norm{\dot V(\omega)}^2_{\mathrm{F}}}\,d\omega\,,
\end{equation}
where $\norm{\cdot}_\mathrm{F}$ denotes the Frobenius norm. If $S(\omega)$ is analytic in $\omega$, we refer to \eqref{eq:smoothBMD} as \emph{\jmvabmd}.
\end{definition}

\begin{remark}
As we have already seen, the inherent non-uniqueness of the unitary factors in the BMD allows for (infinitely) many different ways of constructing a smooth BMD for a given $\omega$-symplectic matrix-valued function. We remark that the recipe used by Gouzien \textit{et al.} (see supplemental notes in ~\cite{Gouzien2020}, page 5), which amounts to imposing the condition $\diag(V(\omega)^\dag\dot V(\omega))=0$, corresponds~\footnote{This follows from Theorem 2.6 in \cite{DIPU_LAA2012}.} to minimizing only the variation of the left unitary factor $V$. Consequently, the resulting ABMD differs from the \jmvabmd, in which the variations of both unitary factors contribute to the minimized quantity. Both recipes are equally valid if one is solely interested in the spectrum of the singular values. However, as we will illustrate in Section~\ref{sec:phys_ex}, the \jmvbmd\ is the mandatory choice for detecting degeneracies, especially when their codimension may be lower than anticipated.
\end{remark}

If $S(\omega)$ is further periodic, then the unitary factors of its \jmvbmd\ acquire a phase over one period. This phase is, in fact, the natural extension of the Berry phase to the smooth BMD, as discussed in \cite{DIPU_LAA2024}. Appropriately monitoring this phase allows us to infer whether a pair of singular values has become degenerate within a given region in parameter space. We formalize this fact in the following theorem, where, to simplify the discussion, we restrict ourselves to spherical regions in parameter space. For further details and extensions, we refer the reader to \cite{DIPU_LAA2012}.

\begin{thm}[Adapted from \cite{DIPU_LAA2012}]\label{thm:codim3detect}
    Consider a $\bm{\lambda}$-symplectic function depending on 3 parameters:
\begin{equation*}
    S:\blam=(\lam_1,\lam_2,\lam_3)\in\R^3\mapsto S(\blam)\in\C^\tnxtn\,,
\end{equation*}
and let $\mathbb{S}\subset\R^3$ be a sphere parametrized in spherical coordinates by:
\begin{equation*}
    \left\{\begin{array}{l} 
    \lambda_1(\varphi,\theta)=c_1+\rho\cos(\varphi)\sin(\theta) \\
    \lambda_2(\varphi,\theta)=c_2+\rho\sin(\varphi)\sin(\theta) \\
    \lambda_3(\varphi,\theta)=c_3+\rho\cos(\theta)
    \end{array} \right.,
\end{equation*}
$\varphi\in[0, 2\pi]$, $\theta\in[0,\pi]$. For every $\theta\in[0,\pi]$ and every $j=1,\ldots,2n$, consider the phase $\alpha_j(\theta)$ accrued by the $j$-th singular vector of the \jmvbmd\ of the matrix-valued function
\begin{equation*}
    \varphi\in[0,2\pi]\mapsto S(\blam(\varphi,\theta)))
\end{equation*}
over the interval $[0,2\pi]$. Note that each $\alpha_j$ can be chosen to be a continuous function of $\theta$ (see \cite{DIPU_LAA2012}). If $\alpha_j(\pi)\ne\alpha_j(0)$ for some $j$, then $S$ must be degenerate at some point $\blam_0$ inside the sphere $\mathbb{S}$, and the degeneracy must involve the singular value $\sigma_j$.
\end{thm}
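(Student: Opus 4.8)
The plan is to prove the contrapositive. The asserted conclusion -- that $S$ is degenerate at some point $\blam_0$ inside $\mathbb{S}$ and that the degeneracy involves $\sigma_j$ -- says precisely that the singular value $\sigma_j$ fails to be \emph{simple} somewhere in the open ball $B$ bounded by $\mathbb{S}$. Its negation, together with the simplicity of $\sigma_j$ on $\mathbb{S}$ itself (implicit in the statement, since it is needed for the \jmvbmd\ of the loop $\gamma_\theta:=S(\blam(\cdot,\theta))$, and hence $\alpha_j(\theta)$, to be defined), is the hypothesis: \emph{$\sigma_j$ is a simple singular value at every point of the closed ball $\overline{B}$}. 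Under it I will show $\alpha_j(\pi)=\alpha_j(0)$; this gives the theorem, and the ``involves $\sigma_j$'' clause comes for free because the whole argument is carried out index by index.

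The engine is the reading of $\alpha_j(\theta)$ as a holonomy. On the set $\mathcal{N}_j\subset\Sp$ of conjugate symplectic matrices whose $j$-th singular value is simple, the $j$-th column pair $(u_j,v_j)$ of the unitary factors of a BMD is determined up to a common phase, and this residual freedom assembles into a complex line bundle $L_j\to\mathcal{N}_j$. The non-uniqueness discussion following Theorem \ref{thm:smoothBlochMessiahDec} shows that any two smooth BMDs of $\gamma_\theta$ differ by a phase matrix, so the phase matrix accrued over one period -- hence each $\alpha_j(\theta)$ -- is well defined. Moreover, minimizing \eqref{eq:joint_MVD_integral} decouples across the index $j$ and singles out, for the $j$-th pair, the parallel-transport condition $(U\da\dot U)_{jj}+(V\da\dot V)_{jj}=0$; this is the recipe of \cite{DIPU_LAA2012,DIPU_LAA2024} specialised to the Bloch--Messiah setting. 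Consequently, in any smooth local frame $(u_j,v_j)$ of $L_j$ along $\gamma_\theta$,
\[
\alpha_j(\theta)=\oint_{\gamma_\theta}A_j,\qquad A_j:=\tfrac{i}{2}\bigl(u_j\da\,\mathrm{d}u_j+v_j\da\,\mathrm{d}v_j\bigr),
\]
with $A_j$ a \emph{real} $1$-form (because $u_j\da\,\mathrm{d}u_j$ and $v_j\da\,\mathrm{d}v_j$ are purely imaginary): $\alpha_j(\theta)$ is the holonomy of the connection $A_j$ around the latitude loop $\gamma_\theta$.

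I would then combine contractibility with Stokes' theorem. Under the contrapositive hypothesis, $\blam\mapsto S(\blam)$ maps the contractible ball $\overline{B}$ into $\mathcal{N}_j$, so the pullback of $L_j$ to $\overline{B}$ is trivial; restricting to the boundary, we obtain a global smooth section of $L_j$ over $\mathbb{S}\cong\mathbb{S}^2$, i.e.\ a smooth choice of $(u_j,v_j)$ along $\mathbb{S}$, in which frame $A_j$ is a globally defined smooth $1$-form on $\mathbb{S}$. For every $\theta\in[0,\pi]$ the loop $\gamma_\theta$ bounds the spherical cap $C_\theta\subset\mathbb{S}$, so Stokes' theorem gives $\alpha_j(\theta)=\oint_{\gamma_\theta}A_j=\int_{C_\theta}\mathrm{d}A_j$. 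At $\theta=0$ the cap is a single point, hence $\alpha_j(0)=0$; at $\theta=\pi$ the cap is all of $\mathbb{S}$, and $\alpha_j(\pi)=\int_{\mathbb{S}}\mathrm{d}A_j=0$, since the integral of an exact $2$-form over a closed surface vanishes. Therefore $\alpha_j(\pi)=\alpha_j(0)$, proving the contrapositive. (A degeneracy involving $\sigma_j$ enclosed by $\mathbb{S}$ is, generically, a codimension-$3$ isolated point; running the same computation around it then yields $\alpha_j(\pi)-\alpha_j(0)=\pm2\pi$ -- a nonzero first Chern number for $L_j$ -- which is what makes the criterion effective.)

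The step I expect to be the main obstacle is the one packaged in the second paragraph: establishing rigorously, in the conjugate-symplectic / Bloch--Messiah setting, that the \jmvbmd\ realises parallel transport on $L_j$ index by index -- that minimizing \eqref{eq:joint_MVD_integral} decouples over $j$ and yields $(U\da\dot U)_{jj}+(V\da\dot V)_{jj}=0$, so that $\alpha_j(\theta)$ is intrinsic and equals the holonomy of a bona fide connection on a well-defined line bundle. Hand in hand with this is pinning down the correct notion of ``neighbour'' of $\sigma_j$: this is delicate here because the symplectic pairing $d_k\leftrightarrow d_k\inv$ links positions $k$ and $n+k$ of $D$ (so the ordering used in $D$ is not the ordering of the values themselves), and it spawns degeneracies such as $d_n=1$ whose codimension may be lower than $3$, cf.\ Section \ref{sec:phys_ex}. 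Once these two ingredients -- both essentially inherited from \cite{DIPU_LAA2012,DIPU_LAA2024} -- are in place, the rest is the routine Stokes argument above.
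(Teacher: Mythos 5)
Your argument is sound, but note that the paper itself offers no proof of Theorem \ref{thm:codim3detect}: it is stated as ``adapted from \cite{DIPU_LAA2012}'' and the reader is deferred to that reference, so there is no internal proof to compare against. What you propose is the standard geometric proof of such results -- read $\alpha_j(\theta)$ as the holonomy of a connection on a line bundle over the non-degenerate stratum, pull back to the ball, and conclude by contractibility plus Stokes that the total curvature flux (first Chern number) through $\mathbb{S}$ vanishes, forcing $\alpha_j(\pi)=\alpha_j(0)$ -- and it is consistent with the machinery the paper does develop: the parallel-transport characterization you need, $(U\da\dot U)_{jj}+(V\da\dot V)_{jj}=0$, is exactly the joint-MVD condition \eqref{eq:diagHK_minvar_requ} recorded in Appendix A (there for the plain SVD, following \cite{DIPU_LAA2024}). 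Two points deserve more care than your sketch gives them, though you flag both. First, in the Bloch--Messiah setting the residual gauge freedom is the phase matrix \eqref{eq:phase_matrix}, which locks columns $j$ and $n+j$ to a \emph{common} phase; so the object carrying the holonomy is a line bundle attached to the column pair $(u_j,u_{n+j};v_j,v_{n+j})$, the minimization of \eqref{eq:joint_MVD_integral} decouples over these pairs rather than over individual indices, and $\alpha_j=\alpha_{n+j}$ automatically. Second, ``$\sigma_j$ simple'' must be read relative to the full list $d_1,\dots,d_n,d_1\inv,\dots,d_n\inv$, so coincidences of the form $d_j=d_k\inv$ (including $d_j=1$) count as degeneracies of the relevant stratum; this is also where the reduced-codimension cases of Section \ref{sec:phys_ex} live. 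With those two ingredients made precise -- both imported from \cite{DIPU_LAA2012,DIPU_LAA2024} -- your contrapositive/Stokes argument is complete and correctly delivers the ``involves $\sigma_j$'' clause, since the whole construction only requires simplicity of the $j$-th pair on the closed ball.
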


As we have already pointed out, 
special symmetries in $S$ may reduce the number of parameters that we should expect to vary in order to observe a degeneracy.

In Section \ref{sec:phys_ex}, Theorem \ref{thm:codim3detect} will be used as a topological tool to detect parameter values where certain symplectic matrix functions are degenerate. In order to do so, we will need to compute the \jmvbmd\ of some matrix functions. Below, we outline the algorithm we have used to accomplish this task. The algorithm is an adaptation of the one proposed in \cite{DIPAPU_SIMAX2013} to our context.

\begin{algorithm}[H]
\algsetup{linenodelimiter=.}
\caption{Computation of a Minimum Variation Bloch-Messiah Decomposition}
\label{algo:smooth_BMD}
\begin{algorithmic}[1]
\REQUIRE An $\omega$-symplectic matrix-valued function $$S:\omega\in [0,\tau]\mapsto S(\omega)\in\C^\tnxtn$$ having distinct singular values for all $\omega\in[0,\tau]$.
\smallskip
\ENSURE Finite sequences
\begin{equation*}
\begin{array}{l}
    0=\omega_0<\omega_1<\ldots<\omega_N=\tau, \\
    U_k\approx U(\omega_k)\in\C^\tnxtn, \\
    V_k\approx V(\omega_k)\in\C^\tnxtn, \\
    D_k\approx D(\omega_k)\in\R^\nxn,
\end{array}
\end{equation*}
$k=0, 1, \ldots, N$, where 
\begin{equation*}
S(\omega)=U(\omega)D(\omega) V^\dag(\omega)
\end{equation*}
is a \jmvbmd\ of $S(\omega)$.
\medskip
\STATE Compute a BDM of $S$ at $\omega_0=0$: $$S(\omega_0)=U_0 D_0 V_0^\dag;$$
\STATE For $k=1, 2, \ldots, N$:

\begin{itemize}
    \item[2.1.] Appropriately choose $h>0$, set $\omega_k=\omega_{k-1}+h$, and compute a BDM of $S$ at $\omega_k$:
$$S(\omega_k)=\widehat{U}_k D_k \widehat{V}_k^\dag.$$
    \item[2.2.] Compute a phase matrix
    \begin{equation*}
    \Theta = \diag(e^{i\theta_1}, \ldots, e^{i\theta_n}, e^{i\theta_1}, \ldots, e^{i\theta_n}),
    \end{equation*}
    such that
    \begin{equation}\label{eq:procrustes}
    \sqrt{
    \norm{U_{k-1}-\widehat{U}_k \Theta}^2_\mathrm{F} +
    \norm{V_{k-1}-\widehat{V}_k \Theta}^2_\mathrm{F}
    }
    \end{equation}
    is minimized over all possible choices of $\theta_1,\ldots,\theta_n\in\R$;  
    \item[2.3.] Set
    $$
    U_k=\widehat{U}_k \Theta,\ V_k=\widehat{V}_k \Theta.
    $$
\end{itemize}
\end{algorithmic}
\end{algorithm}

Algorithms~\ref{algo:Bloch-Messiah} and~\ref{algo:smooth_BMD} have been implemented in both MATLAB and Python. The MATLAB implementation is available on the MathWorks File Exchange at \footnote{\href{https://www.mathworks.com/matlabcentral/fileexchange/180648-smooth-bloch-messiah-decomp-of-conj-sympl-matrix-function}{https://www.mathworks.com/matlabcentral/fileexchange/180648-smooth-bloch-messiah-decomp-of-conj-sympl-matrix-function}}, 
while the Python version can be found on GitHub at \footnote{\url{https://github.com/apulian/smooth-BMD/}}.

\begin{remark} We highlight that Algorithm \ref{algo:smooth_BMD} follows the general philosophy of predictor-corrector methods. At each step $\omega_k$, the prediction (step 2.1) is given by a Bloch-Messiah decomposition at $\omega_k$ computed via Algorithm \ref{algo:Bloch-Messiah}, while the correction (step 2.3) is performed by adjusting the unitary factors at $\omega_k$ to be as close as possible to those at $\omega_{k-1}$ via post-multiplication by a diagonal phase matrix. We point out that:
\begin{itemize}
    \item[i)] In step 2.1, the step size $h$ is chosen adaptively to ensure that the distances $\norm{U_{k-1}-U_k}_\mathrm{F}$, $\norm{V_{k-1}-V_k}_\mathrm{F}$, and $\norm{D_{k-1}-D_k}_\infty$ remain below (but sufficiently close to) a user-specified threshold;
    \item[ii)] The minimization problem in step 2.2 can be reformulated and efficiently solved as a special orthogonal Procrustes problem.
\end{itemize}
Finally, we remark that one can easily adapt Algorithm \ref{algo:smooth_BMD} to obtain a reduced \jmvbmd, where only a selected portion of dominant singular values and corresponding singular vectors is computed. This feature lowers the overall computational complexity of the algorithm and has been used in the experiments reported in Section~\ref{subsec:large}.\end{remark}

\begin{remark}\label{remark:caution} We stress that computing an ABMD (let alone a \jmvbmd) is not simply a matter of collecting BMDs computed at different parameter values. Failing to recognize this fact can lead to incorrect results, where, for instance, avoided crossings of singular values may be mistaken for true crossings.
\end{remark}


\FloatBarrier
\section{Physical examples}\label{sec:phys_ex}

In this final section, we apply the theoretical results and algorithms from Section \ref{sec:LinAlg} to two physical systems. We begin with a small system that is small enough to be fully described while still exhibiting all the features of generic systems. We then conclude with a higher-dimensional one.

\subsection{A four-mode system}\label{subsec:four}
Here we consider a four-mode system whose mode-hopping and pair-generation terms can be experimentally controlled. This could be implemented in several physical platforms, for example in arrays of coupled nonlinear photonic cavities~\cite{Cao2016,Ye2023,Zhao2023,Jabri2024,Ravets2025}, engineered multimode parametric oscillators~\cite{Patera2012} or $\chi^{(3)}$ nonlinear microresonators~\cite{Bensemhoun2024}.
Thus, we consider $S$ as in \eqref{S} and $\mathcal{M}$ as in \eqref{eq:M_cali}, and we start by choosing
\begin{align}
G/\gamma&=
\left[
\begin{array}{cccc}\label{eq:G codim3 n4}
1.35 & 0.08 & 0.65 & 0.08
\\
0.08 & 1.25 & 0.08 & 0.65
\\
0.65 & 0.08 & 1.35 & 0.08
\\
0.08 & 0.65 & 0.08 & 1.25
\end{array}
\right],
\\
F/\gamma&=
\left[
\begin{array}{cccc}
0.08 & 0.25 & 0.1 & 0.5+i 0.05
\\
0.25 & 0.1 & 0.5 + i 0.05 & 0.1
\\
0.1 & 0.5+i 0.05 & 0.1 & 0.25
\\
0.5 + i 0.05 & 0.1 & 0.25 & 0.08
\end{array}\label{eq:F codim3 n4}
\right],\\
\Gamma/\gamma &= \diag(1,1.5,1,1.5,\ldots,1,1.5),
\end{align}
where all quantities have been normalized to the damping rate of the first mode, say $\gamma$.

\begin{figure}[t!]
    \centering
    \includegraphics[width=0.45\textwidth]{./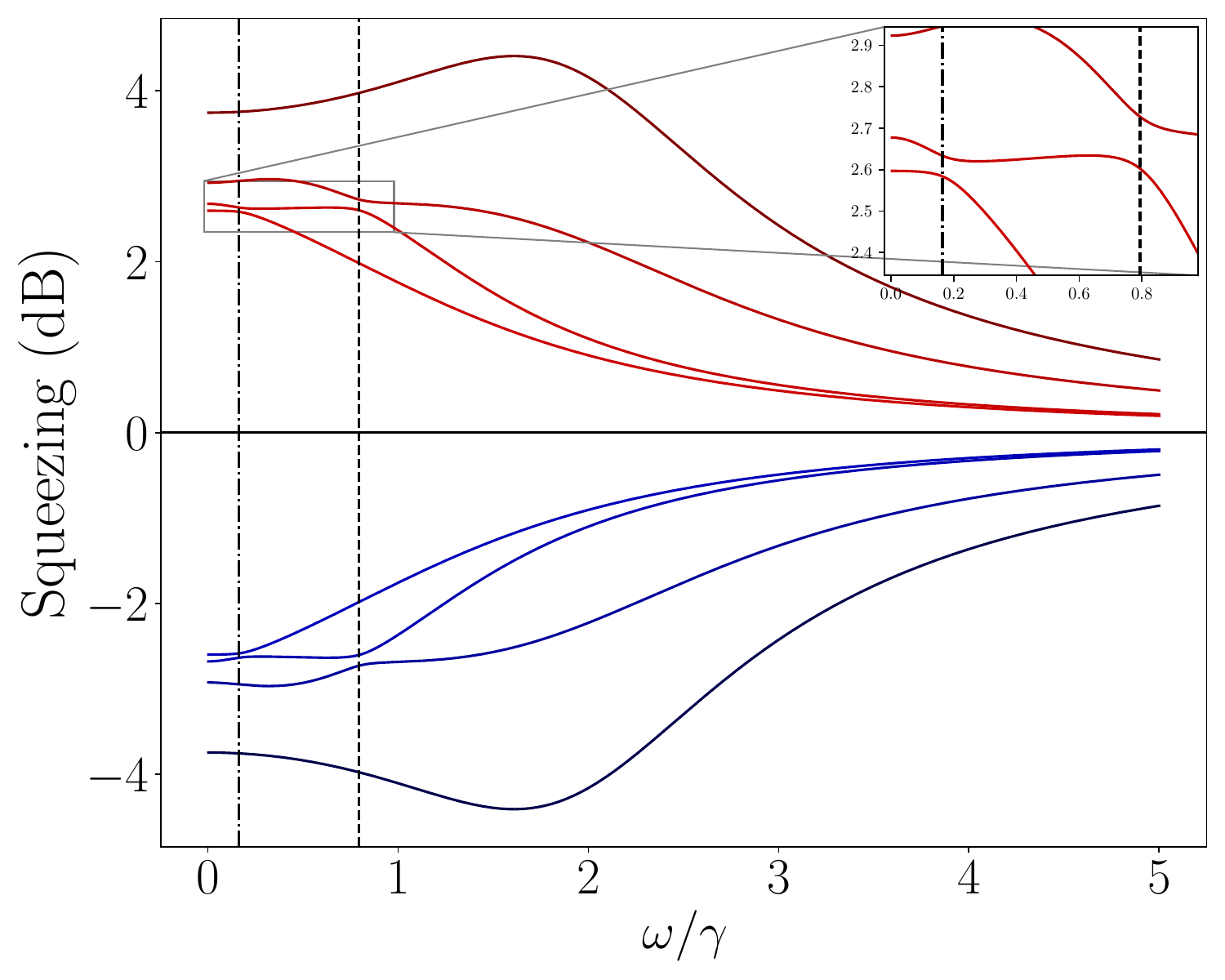}
    \caption{Case $n=4$ and $\text{codim}=3$. Squeezing, $d_{n+j}^{-2}(\omega)$ and anti-squeezing, $d_i^2(\omega)$ (with $j\in\{1,\ldots,4\}$) spectra, in dB.  The specific choice for the parameters gives an avoided crossing between the second ($j=1$) and third ($j=2$) squeezing and anti-squeezing spectra around $\omega=0.794\gamma$ (marked by the vertical black-dashed line) and an avoided crossing between the third ($j=2$) and forth ($j=3$) squeezing spectra around $\omega=0.164\gamma$ (marked by the vertical black-dot-dasjhed line). The inset shows a magnified view of the two avoided crossing regions in the anti-squeezing spectra.}
    \label{fig:n=4 codim=3 curves}
\end{figure}
\begin{figure}[t!]
  \centering

  \begin{subfigure}[t]{\columnwidth}
    \centering
    \includegraphics[width=\linewidth]{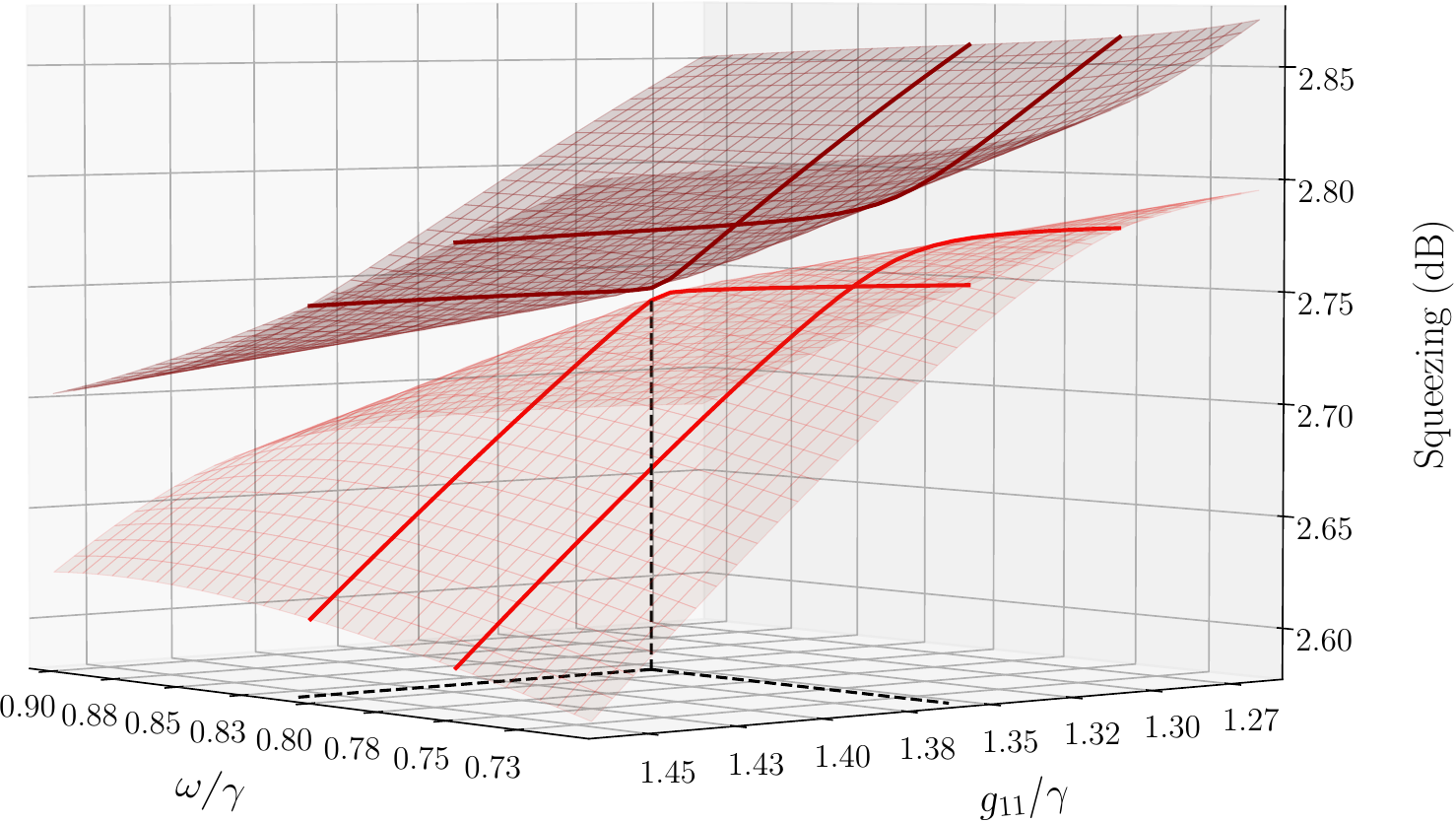}
    \caption{Case of avoided crossing.}
    \label{fig:subfig1}
  \end{subfigure}

  \vspace{0.5em}

  \begin{subfigure}[t]{\columnwidth}
    \centering
    \includegraphics[width=\linewidth]{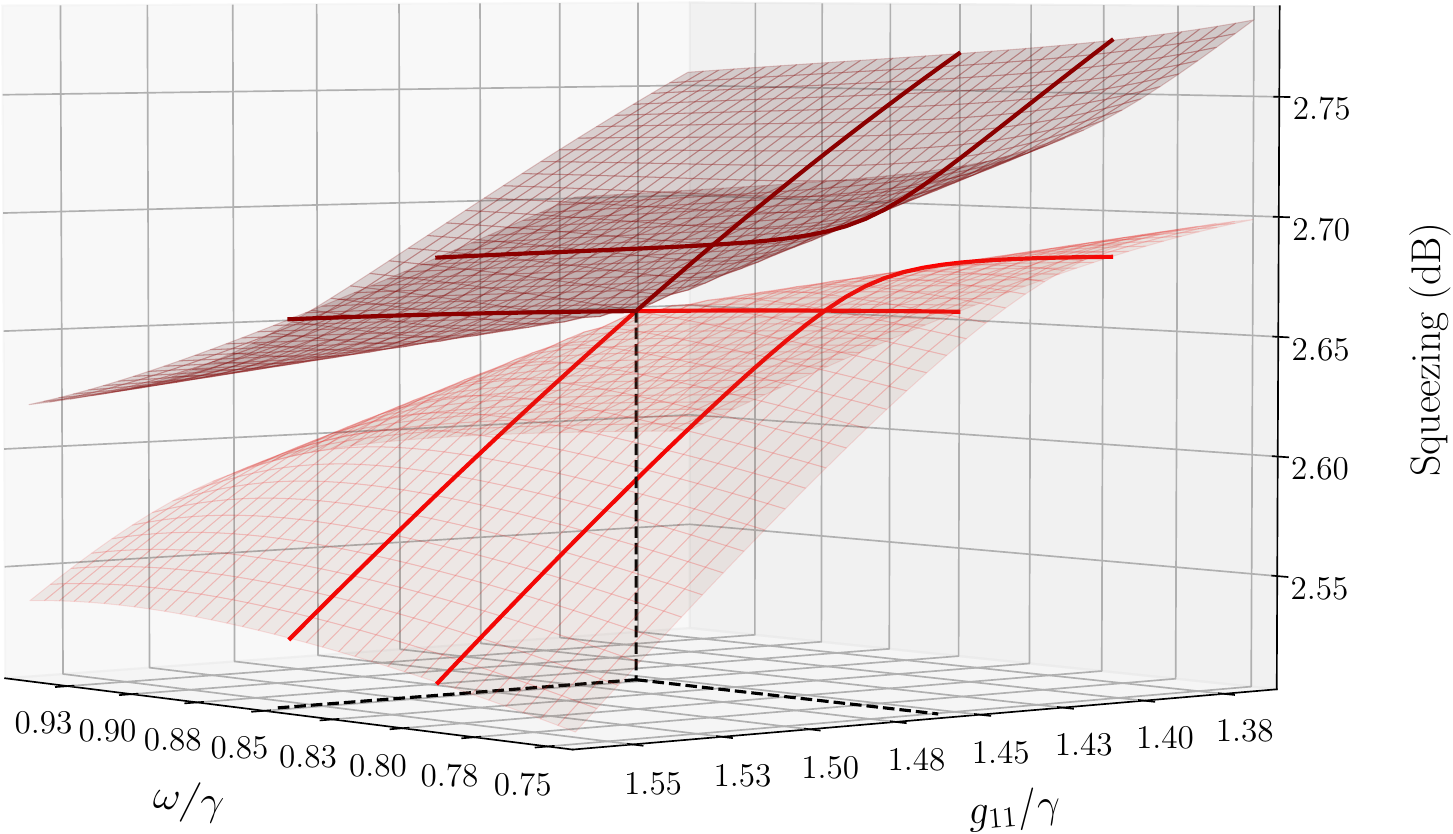}
    \caption{Case of a point of degeneracy.}
    \label{fig:subfig2}
  \end{subfigure}

  \caption{Anti-squeezing spectra in dB ($d_1^2(\omega,g_{12})$ and $d_2^2(\omega,g_{12})$) for the case $n=4$, $\mathrm{codim}=3$. (a) For $g_{22}=1.25\gamma$: the surfaces of anti-squeezing avoid crossing through the entire plane $(\omega,g_{11})$. (b) For $g_{22}\approx1.12\gamma$: the surfaces of squeezing touch at one point of degeneracy in the plane $(\omega,g_{11})$.}
  \label{fig:n3-codim3}
\end{figure}
First, we computed the \jmvabmd\ of $S(\omega)$ over the interval $\omega/\gamma\in[0,5]$. Figure \ref{fig:n=4 codim=3 curves} shows the anti-squeezing (shades of red) and squeezing (shades of blue) spectra obtained from the singular values. Let us consider the singular values that give the anti-squeezing spectra, i.e. $d_j(\omega)$ for $j=1,\ldots,4$. Owing to the symmetry of the singular value spectrum with respect to the shot noise level (0 dB), similar features also appear in the squeezing spectra. Notably, $d_2(\omega)$ and $d_3(\omega)$ appear to approach a crossing near $\omega\approx0.8\gamma$ before veering away from each other. 
\begin{figure}[t!]
    \centering
    \includegraphics[width=0.45\textwidth]{./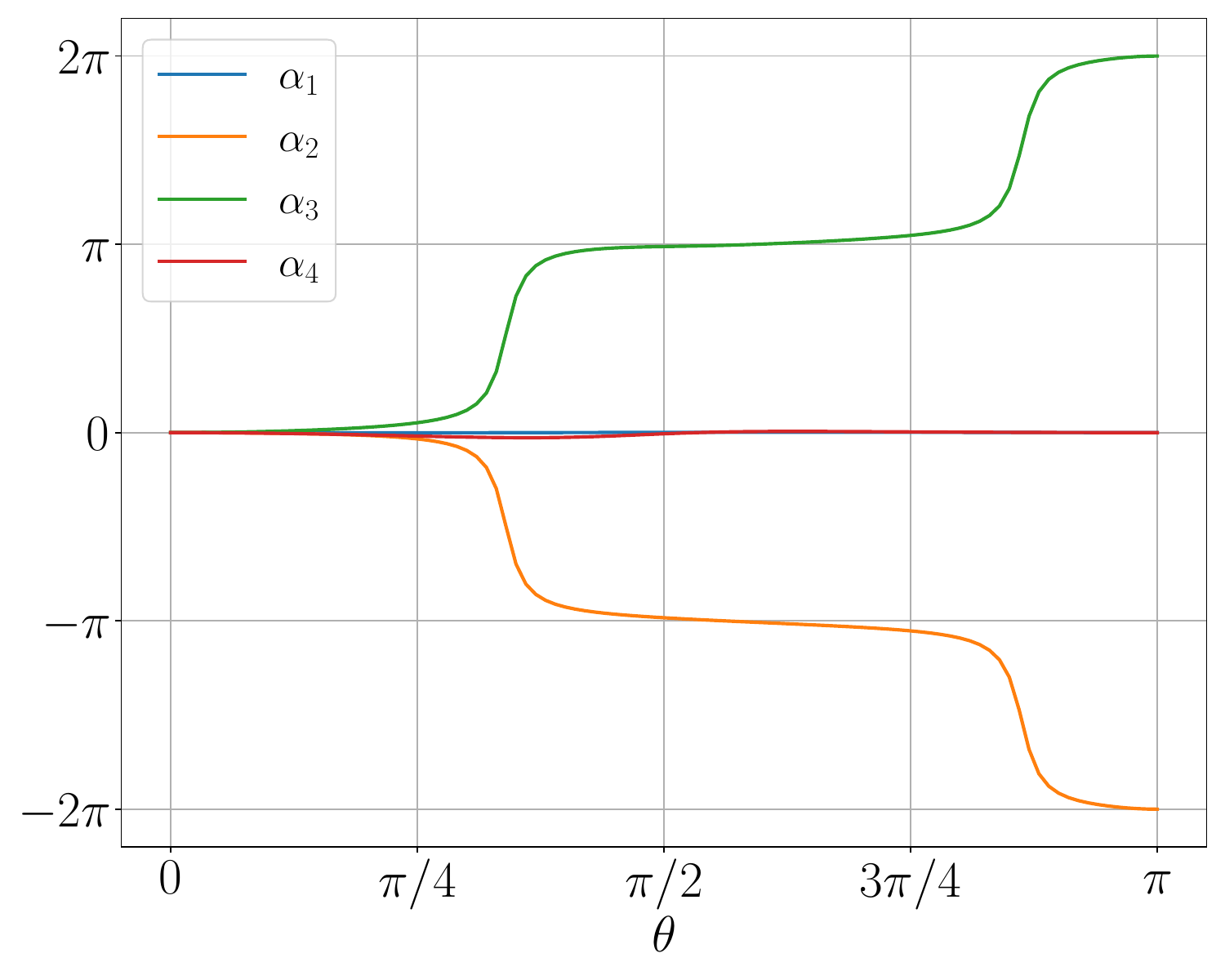}
    \caption{Continuous Berry phases $\alpha_j(\theta)$ for $j = 1, \ldots, 4$ and $\theta \in [0, \pi]$ associated to the corresponding singular vectors of \eqref{eq:Sfun_3param} computed along the parallels of the sphere of radius $0.5$, centered at $(\omega, g_{11}, g_{22}) = (0.8, 1.35, 1.25)$. Note that $\alpha_2$ and $\alpha_3$ do not return to zero, indicating that $d_2$ and $d_3$ experience a degeneracy within the region enclosed by the sphere.}
    \label{fig:n=4 codim=3 berry phases}
\end{figure}
A similar behavior is observed for $d_3(\omega)$ and $d_4(\omega)$ near $\omega\approx0.15\gamma$. These so-called \emph{avoided crossings} are typical in the spectra of matrix-valued functions when the number of varying parameters is smaller than the codimension of the degeneracies. Specifically, since the codimension of degeneracies is 3, an actual degeneracy should not be expected when varying only one or two parameters. This is illustrated in Figure~\ref{fig:subfig1}, which shows two singular-value surfaces undergoing an avoiding crossing. We also note that, due to the presence of avoided crossings with $d_4$ and $d_2$, the singular value $d_3$ takes an almost flat structure through the spectral interval between $0.15\gamma$ and $0.8\gamma$.

The avoided crossings in Figure~\ref{fig:n=4 codim=3 curves} should be interpreted as an indication of the possible presence of nearby degeneracies, which could be observed by freeing two additional parameters. This is confirmed by considering the matrix function
\begin{equation}\label{eq:Sfun_3param}
    (\omega,g_{11},g_{22})\mapsto S(\omega,g_{11},g_{22})\in\C^\tnxtn,
\end{equation}
which is conjugate symplectic for all values of $(\omega,g_{11},g_{22})$.
Applying the topological test of Theorem \ref{thm:codim3detect} to the sphere of radius $0.5\gamma$ centered at $(\omega,g_{11},g_{22})=(0.8\gamma, 1.35\gamma, 1.25\gamma)$, we obtain the continuous Berry phases shown in Figure \ref{fig:n=4 codim=3 berry phases}, which show that $\alpha_j(\pi)\ne\alpha_j(0)$ for $j=2,3$. This confirms that $d_2$ and $d_3$ undergo a degeneracy inside the region enclosed by the sphere. To further refine the degeneracy's location, we minimized the difference
\begin{equation*}
    d_2(\omega,g_{11},g_{22})-d_3(\omega,g_{11},g_{22}).
\end{equation*}
using MATLAB's simplex-based method \texttt{fminsearch} and using the center of the sphere as the initial seed. This yielded the approximate degeneracy location:
\begin{equation*}
    \begin{bmatrix}
        \omega \\ g_{11} \\ g_{22}
    \end{bmatrix}\approx
    \begin{bmatrix}
    0.84582091\gamma \\
     1.4623127\gamma \\
     1.1179756\gamma
    \end{bmatrix}
\end{equation*}
with difference $d_2-d_3\approx3.4\times10^{-9}$. Choosing $g_{22} \approx 1.1179756\gamma$, a degeneracy between the singular value surfaces $d_2$ and $d_3$ becomes apparent in the $(\omega, g_{11})$ plane, as illustrated in Figure~\ref{fig:subfig2}. Figures displaying the entries of the singular vectors and showing their rapid variations near the avoided crossings are presented in Appendix~\ref{Appendix:singular vectors}. A detailed discussion of this phenomenon will be provided in the next subsection.
\begin{figure}[t!]
    \centering
    \includegraphics[width=0.45\textwidth]{./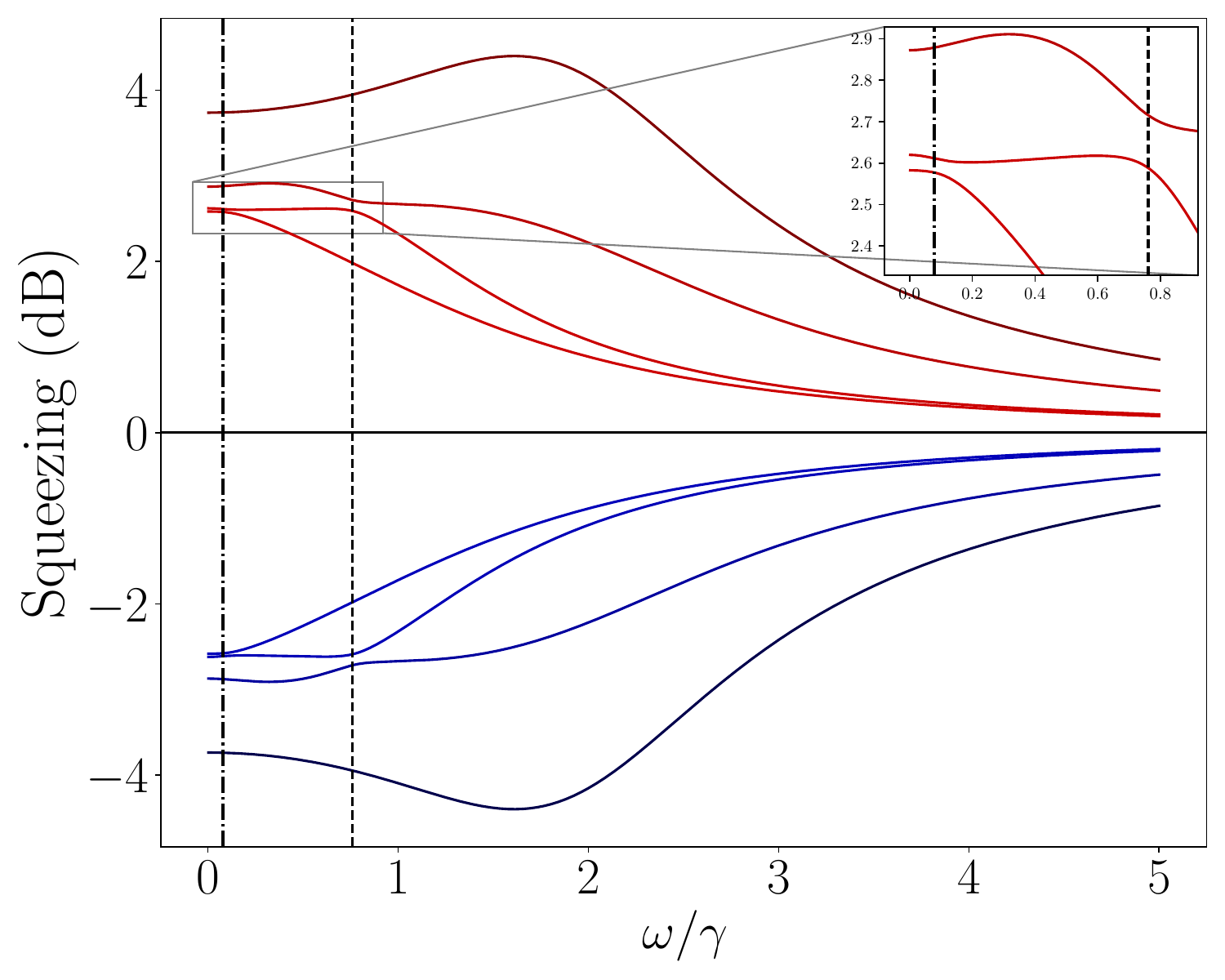}
    \caption{Case $n=4$ and $\text{codim}=2$. Squeezing, $d_{n+j}^{-2}(\omega)$ and anti-squeezing, $d_j^2(\omega)$ (with $j\in\{1,\ldots,4\}$) spectra, in dB. The specific choice for the parameters gives an avoided crossing between the second ($j=1$) and third ($j=2$) squeezing and anti-squeezing spectra around $\omega=0.761\gamma$ (marked by the vertical black-dashed line) and an avoided crossing between the third ($j=2$) and forth ($j=3$) squeezing spectra around $\omega=0.079\gamma$ (marked by the vertical black-dot-dashed line). The inset shows a magnified view of the two avoided crossing regions in the anti-squeezing spectra.}
    \label{fig:n=4 codim=2 curves}
\end{figure}
\begin{figure}[t!]
    \centering
    \includegraphics[width=1\columnwidth]{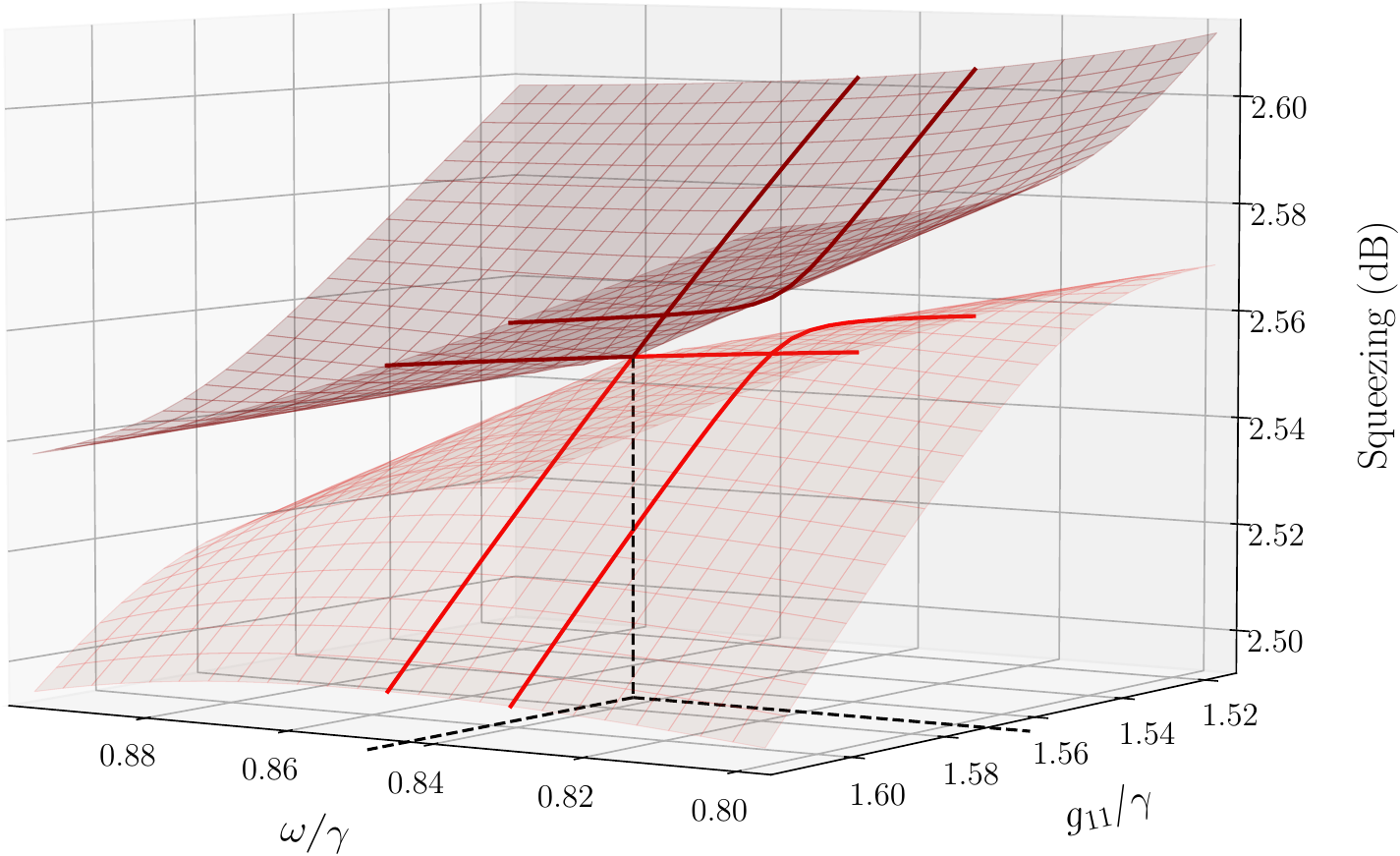}
    \caption{Anti-squeezing spectra ($d_1^2(\omega)$ and $d_2^{2}(\omega)$), in dB, for the case $n=4$ and $\text{codim}=2$. The surfaces of squeezing touch at one point of degeneracy in the plane $(\omega,g_{11})$ corresponding to $\omega\approx0.847\gamma$ and $g_{11}\approx1.56\gamma$.}
    \label{fig:n=4 codim=2 surfs}
\end{figure}


We also considered the situation where the couplings are fully real, replacing $F$ of~\eqref{eq:F codim3 n4} with its real part. Computing the four curves of singular values, we obtained Figure \ref{fig:n=4 codim=2 curves}, which is qualitatively similar to Figure \ref{fig:n=4 codim=3 curves} and shows two avoided crossings. When  investigating the existence of a degeneracy near the second avoided crossing as we did previously, we observed that the Berry phases considered in Theorem \ref{thm:codim3detect}, and computed via Algorithm \ref{algo:smooth_BMD}, took only values that are multiples of $\pi$. This behavior contrasts with the codimension 3 case, where, in general, $\alpha_j$ varies smoothly across a continuous range of values when $\theta$ is varied (see Figure~\ref{fig:n=4 codim=3 berry phases}, particularly $\alpha_j$ for $j=2, 3$). This suggests that the codimension of degeneracies may be lower than expected~\footnote{For instance, the Berry phase takes only values that are multiples of $\pi$ when the matrix function is real symmetric, in which case the codimension of degeneracies is in fact 2, see \cite{DiPu_realSVD,DIPU_MathCompSym2008}.}. Indeed, this turns out to be the case. It is easy to see that, when both $F$ and $G$ are real, the matrix $S(\omega)$ satisfies
\begin{equation*}
    S(\omega)\begin{bmatrix}
        0 & I \\ I & 0
    \end{bmatrix}=
    \left(
    S(\omega)\begin{bmatrix}
        0 & I \\ I & 0
    \end{bmatrix}
    \right)^T;
\end{equation*}
in other words, $S(\omega)\begin{bmatrix} 0 & I \\ I & 0 \end{bmatrix}$ is complex symmetric for all values of $\omega$. It is known (see \cite{DiPaPu_takagi}) that the degeneracy of singular values for complex symmetric matrices has codimension $2$, and that the right tool to use in this case is the Takagi decomposition. The same result on the codimension must be true also for $S(\omega)$, since it has the same singular values of $S(\omega)\begin{bmatrix} 0 & I \\ I & 0 \end{bmatrix}$.
In Appendix \ref{sec:takagi vs jointMVD}, we show that the joint-MVD (\cite{DIPU_LAA2024}) of a complex symmetric matrix function is ``essentially'' a smooth Takagi decomposition. This fact allows us to apply the results in \cite{DiPaPu_takagi} to infer the presence of a point of degeneracy by introducing one extra degree of freedom and computing the joint-MVD around loops in 2-parameters space. In \cite{DiPaPu_takagi}, the authors show that a change of sign of a singular vector of a smooth Takagi decomposition around a loop in two-parameters' space signifies the presence of a point of degeneracy for the corresponding singular value inside the loop. We considered the $\blam$-symplectic function
\begin{equation*}
    (\omega,g_{11})\mapsto S(\omega,g_{11})\in\C^\tnxtn,
\end{equation*}
and computed the \jmvabmd\ of $S$ around the circle of radius $0.25\gamma$ centered at $(0.75\gamma, 1.35\gamma)$. A comparison between the unitary factors at the beginning and the end of the loop yielded
\begin{equation*}
    U(0)^\dag U(1) = V(0)^\dag V(1)=
\begin{bmatrix}
1 & 0 & 0 & 0 \\
0 & -1 & 0 & 0 \\
0 & 0 & -1 & 0 \\
0 & 0 & 0 & 1
\end{bmatrix}.
\end{equation*}
This, according to \cite{DiPaPu_takagi}, signals the presence of a degeneracy inside the circle for the pair $(d_2,d_3)$. Minimizing the difference
\begin{equation*}
    d_2(\omega,g_{11})-d_3(\omega,g_{11})
\end{equation*}
through MATLAB's \texttt{fminsearch}, using the center of the circle as initial seed, yielded the approximate degeneracy location:
\begin{equation*}
    \begin{bmatrix}
        \omega \\ g_{11}
    \end{bmatrix}\approx
    \begin{bmatrix}
    0.84764778 \gamma \\
    1.5643801 \gamma
    \end{bmatrix}.
\end{equation*}
with difference $d_2-d_3\approx3.2\times10^{-9}$.
\begin{remark}\label{rem:degeneracies only in dynamics} 
For every value of the parameters $g_{11}, g_{22}$ at which the spectrum of the transfer function $S$ displayed a degeneracy, we computed the corresponding spectrum of the Hamiltonian and found no degeneracy. This shows that the spectral degeneracies under consideration in this work manifest themselves only ``through the driven-dissipative quantum dynamics''.
\end{remark}
\begin{figure}[t!]
\includegraphics[width=\columnwidth]{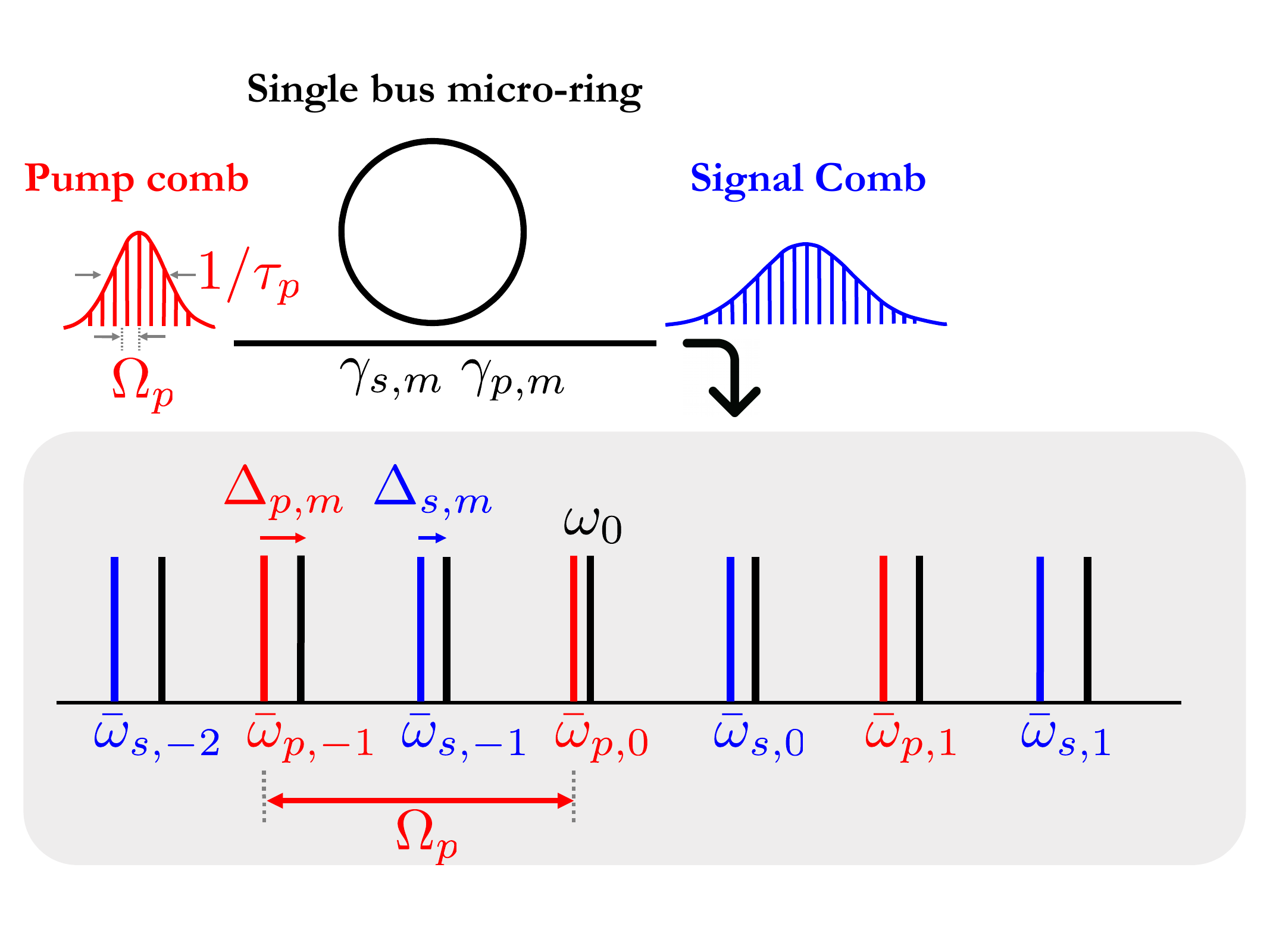}
\caption{Schematic of a synchronously pumped micro-ring.
The pump frequency comb has carrier $\omega_p = \bar{\omega}_{p, 0}$, pulse duration $\tau_{\mathrm{p}}$ and
repetition rate $\Omega_{\mathrm{p}}$ approximately equal to the double of the cavity FSR so that the pump spectral components match one cavity resonance out of two, 
with mode-dependent detuning $\Delta_{\mathrm{p},m}$.
Coupling losses are given by $\gamma_{\mathrm{f},m}$, with f$=\{$p,s$\}$.
}\label{fig:spopo}
\end{figure}

\subsection{The pulsed microring resonator in synchronously pumped regime}\label{subsec:large}

As a case of study with large dimension, we consider an optical nonlinear cavity whose dynamics is highly multimode.
This system, represented in Figure~\ref{fig:spopo}, is the same studied in~\cite{Gouzien2023} and consists of a microring resonator with a $\chi^{(3)}$ nonlinearity, coupled to a single straight injection waveguide (single-bus device) and driven by a comb of equally spaced spectral components
\begin{align}
\bar{\omega}_{\mathrm{p},j}=\omega_{\mathrm{p}}+j\Omega_{\mathrm{p}}
\label{omegap bar}
\end{align}
($j\in\mathbb{Z}$), centered around the frequency $\omega_\mathrm{p}$.
In order to have distinct signal and pump modes and guarantee a synchronous pumping regime, the driving comb free spectral range (FSR), $\Omega_{\mathrm{p}}$, is taken to be
about 2 times the cavity average FSR. The cavity bare resonances are given by:
\begin{equation}\label{omega_m}
\omega_{j}=
\omega_0+\sum_{j\geq 1}\frac{\Omega_j}{j!}m^j.
\end{equation}
The reference label $j=0$ indicates the resonance whose frequency approximately matches the pump carrier, $\omega_0\approx\omega_\mathrm{p}$.
The first order parameter $\Omega_1=c/(n_g R_{\mathrm{eff}})$ gives the average cavity FSR in terms of the speed of light in vacuum
$c$, the group index $n_g$, and the ring effective radius $R_{\mathrm{eff}}$.
The parameter $\Omega_{2}=-(n'_g c^ 2)/(n_g^ 3 R_{\mathrm{eff}}^ 2)$ accounts for second-order dispersion effects via the frequency derivative $n'_g$ together with higher-order dispersion terms $\Omega_{k>2}$.

Since $\Omega_\mathrm{p}\approx2\Omega_1$, the pump injection components approximately match the cavity resonances of even order so that
\begin{align}
\bar{\omega}_{\mathrm{p},j}\approx\omega_{\mathrm{p}}+2j\Omega_1.
\end{align}
Due to dispersion, their detuning with respect to even cavity resonances $\Delta_{\mathrm{p},j}=\omega_{2j}-\bar{\omega}_{\mathrm{p},j}$ changes with $j$.
Frequency signal modes are generated by FWM at frequencies
\begin{align}
\bar{\omega}_{\mathrm{s},j}=\omega_{\mathrm{p}}+(2j+1)\frac{\Omega_{\mathrm{p}}}{2}
\label{omega s bar}
\end{align}
and can thus be unequivocally distinguished from the pump (``s'' stands for ``signal'').
\begin{figure}[t!]
\centering
\includegraphics[width=1.\columnwidth]{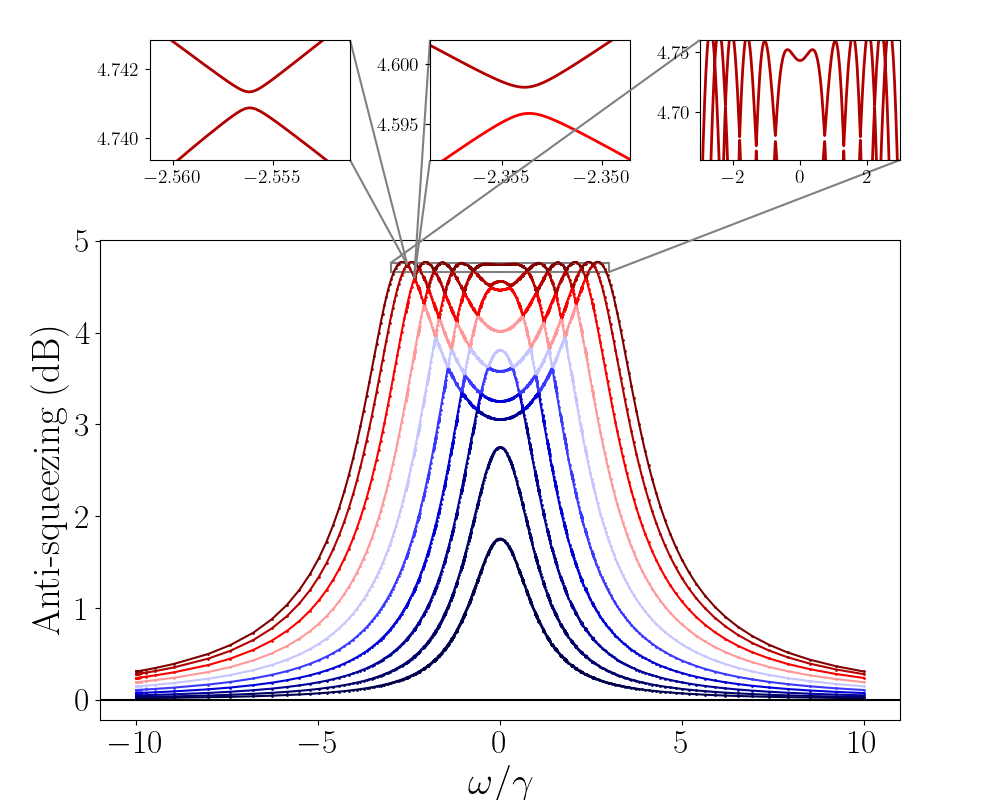}
\caption{First ten anti-squeezing spectra, $d_j^2(\omega)$ in dB (for $j=\{1,\ldots,10\}$), for the case of pulsed microring resonator in synchronously pumped regime with $n=51$. The leftmost inset shows an avoided crossing between the first and second anti-squeezing spectra around $\omega\approx -2.556\gamma$. The central inset shows an avoided crossing between the second and third anti-squeezing spectra around $\omega\approx -2.354\gamma$. The rightmost inset shows the top part of the the first and second anti-squeezing spectra.}
\label{fig:eval}
\end{figure}
\begin{figure*}[t!]
\centering
\includegraphics[width=1\textwidth]{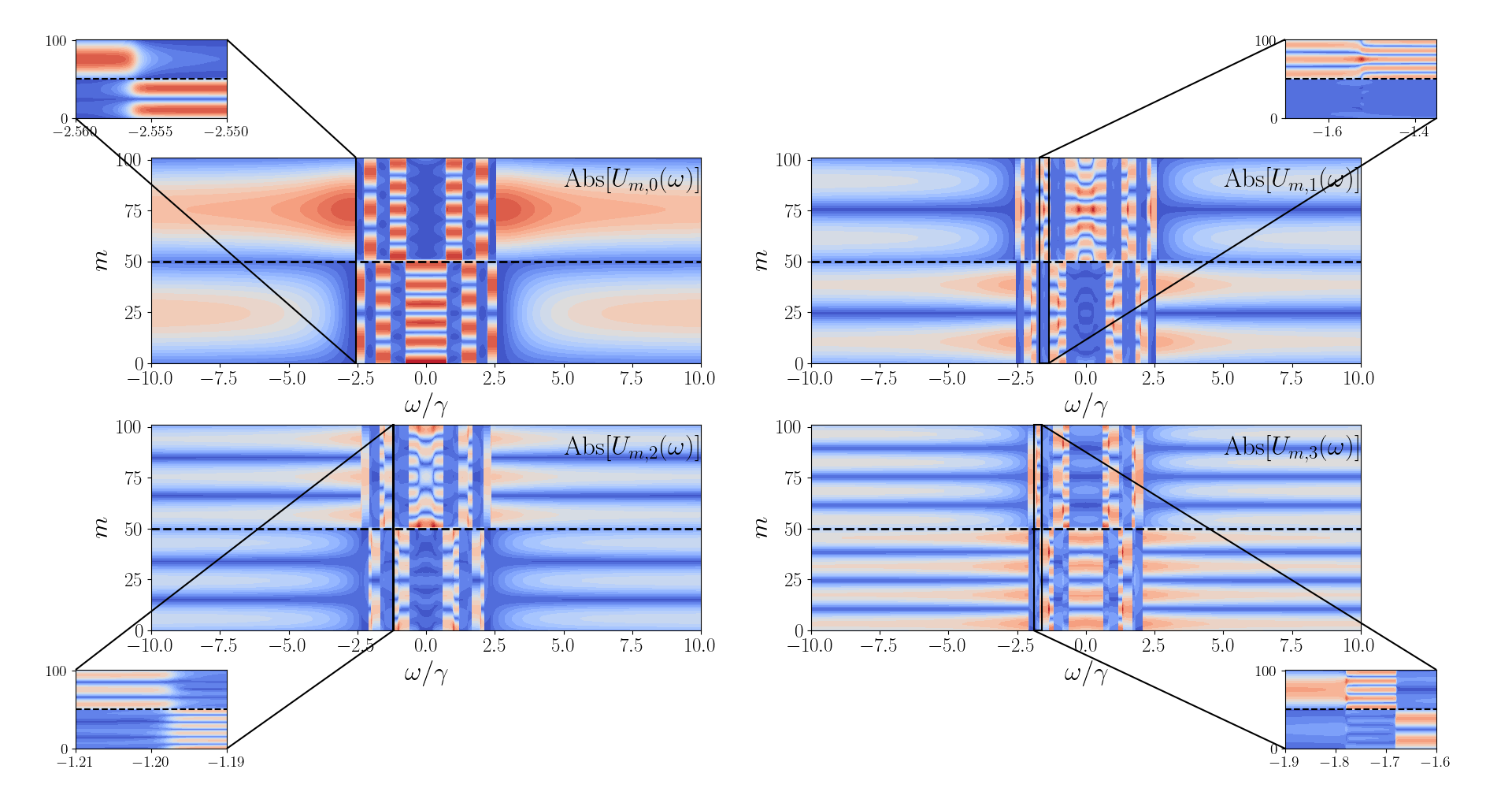}
\caption{First four most anti-squeezed morphing supermodes ($j = {1,\ldots,4}$) for a pulsed microring resonator in the synchronously pumped regime with $n = 51$. The enlarged views show the rapid (yet smooth) rotation of two supermodes near an avoided crossing.}\label{fig:evec}
\end{figure*}
They are in general detuned by $\Delta_{\mathrm{s},j}=\omega_{2j+1}-\bar{\omega}_{\mathrm{s},j}$ with respect to the odd cavity resonances. By performing a Taylor expansion up to second order, the detuning can be written as
\begin{equation}
    \Delta_{\mathrm{s},j}
    \approx
    \Delta_{\mathrm{s},0}
    +\Delta\Omega\left(2j+1\right)
    +\frac{\Omega_2}{2!}(2j+1)^2,
\end{equation}
where $\Delta_{\mathrm{s},0}=\omega_0 - \omega_{\mathrm{p}}$ represents the detuning between the cavity resonance $j=0$ and the carrier of the injection, and $\Delta\Omega=\Omega_1-\Omega_{\mathrm{p}}/2$ represents the mismatch with respect to perfect synchronization.
Below threshold the signal modes remain macroscopically empty but their fluctuations undergoes a multimode dynamics that can generate multimode vacuum squeezed state. By collecting the signal quadratures, $\hat{\bm{x}}_{\mathrm{s}}(t)=(\ldots,\hat{x}_{\mathrm{s},-1},\hat{x}_{\mathrm{s},0},\hat{x}_{\mathrm{s},1},\ldots)^\transp$ and $\hat{\bm{y}}_{\mathrm{s}}(t)=(\ldots,\hat{y}_{\mathrm{s},-1},\hat{y}_{\mathrm{s},0},\hat{y}_{\mathrm{s},1},\ldots)^\transp$, in the column vector $\hat{\bm{R}}(t)=(\hat{\bm{x}}_{\mathrm{s}}(t)|\hat{\bm{y}}_{\mathrm{s}}(t))^\transp$, one can see that the corresponding linearized dynamics is described by eq.~\eqref{eq:quantum_lang_r} where the coupling matrix $\mathcal{M}$, eq.~\eqref{eq:M_cali}, with
\begin{align}
F_{j,\ell} &= g \sum_{m} {\expval{\hat{p}_{j-m+\ell+1}}} {\expval{\hat{p}_m}}, \label{F} \\
G_{j,\ell} &= \Delta_{\mathrm{s},j}\delta_{j,\ell}
	+ g \sum_m 2 {\expval{\hat{p}_{j+m-\ell}}} {\expval{\hat{p}_m}}^*, \label{G}
\end{align}
with $g$ the nonlinear coupling, $\langle\hat{p}_j\rangle$ the steady state solutions of the classical nonlinear equation for the intracavity pump modes (for further details see~\cite{Gouzien2023}) and $\delta_{j,\ell}$ the Kronecker delta.
We consider the case where the injection is engineered so that the pump modes are macroscopically populated with a (real) Gaussian distribution
\begin{equation}\label{eq:pjs}
\left\langle \hat{p}_j \right\rangle
=
A_0\, 
\mathrm{e}^{-(j-j_0)^2/2\sigma^2},
\end{equation}
where $j_0$ is the pump mode order defining the center of the Gaussian distribution and $\sigma$ its width measured in units of $\Omega_1$. We choose $n = 51$, $\Gamma = \gamma\,I_{2n}$ (i.e., all modes undergo the same damping rate). We consider perfect resonance $\Delta_{\mathrm{s},0} = 0$, perfect synchronization $\Delta\Omega = 0$, and dispersion-compensated operation ($\Omega_2 = 0$). We take $\sigma = 3$ and center equal to $j_0 = (n - 1)/2$.
It is important to note the impact of the parameter $A_0$ on the spectrum of singular values: increasing $A_0$ shifts all singular values upward, while decreasing it shifts them downward. This effect not only raises the entire spectrum but also tends to compress the singular values toward the upper end. This behavior can be exploited to tailor in a rudimentary, but simple, way the structure of the squeezing spectra. Indeed, the higher the value of $A_0$, the greater the number of spectra that come close to each other and undergo avoided crossings at multiple points (see Figure~\ref{fig:eval}). As a result, the largest spectra are constrained to oscillate within a narrow range of values, thus giving origin to an almost flat-band spectrum. This feature is important for the quality of de-Gaussification protocols of squeezed states, as pointed out by Asavanant \textit{et al.} in~\cite{Asavanant2017}.

In the following, we set $A_0 = 0.4\sqrt{\gamma/g}$. We then consider the \jmvabmd\ of $S(\omega)$ with $\omega \in [-10\gamma, 10\gamma]$, and compute its first 10 modes (i.e., its 10 largest singular values and the associated singular vectors).
Figure~\ref{fig:eval} shows the outcome of the computation. 
In particular, it shows that the first eight most anti-squeezed modes (and, by symmetry, the same holds for the squeezing spectra) go through several avoided crossings. The first (i.e. largest) anti-squeezed spectrum experiences ten avoided crossings with the second one, forcing it to oscillate within a very narrow interval and making it nearly flat-band. Figure~\ref{fig:evec} shows the first four most anti-squeezed morphing supermodes. Focusing on the first supermode (top-left in Figure~\ref{fig:evec}), we observe that at the first avoided crossing, located at $\omega \approx -2.556\gamma$, it rapidly yet smoothly rotates (see the enlarged view in the top-left panel) and exchanges with the second morphing supermode (top-right in Figure~\ref{fig:evec}). It then undergoes a second avoided crossing located at $\omega \approx -2.237\gamma$ and exchanges with the third supermode. A similar behavior is observed at other avoided crossings -- all occurring within the interval $\omega \in [-2.5\gamma, 2.5\gamma]$ -- where the first supermode smoothly rotates and exchanges with the corresponding supermode.

We run through several avoided crossings, and decided to investigate the presence of a point of degeneracy near the leftmost one, which involves the pair $(d_1, d_2)$. We note that, since the pump modes are real valued [see \eqref{eq:pjs}], also in this case the codimension of the degeneracies is 2. Repeating what we have done in the previous section, we introduced the additional degree of freedom $\Delta_{\mathrm{s},0}$ and computed the Berry phase associated to all the singular values for the conjugate $\blam$-symplectic matrix-valued function
\begin{equation*}
    (\omega,\Delta_{\mathrm{s},0})\mapsto S(\omega,\Delta_{\mathrm{s},0})\in\C^\tnxtn.
\end{equation*}
around the circle of radius $0.1\gamma$ centered at $(-2.556\gamma,0)$. Comparing the unitary factors at the beginning and end of the circle yielded
\begin{equation*}
    U(0)^\dag U(1) = V(0)^\dag V(1)=
\begin{bmatrix}
-1 & 0 & 0 &  \cdots & 0 \\
0 & -1 & 0 & \cdots & 0 \\
0 & 0 & 1 & \cdots & 0 \\
\vdots & \vdots & \vdots & \ddots & \vdots \\
0 & 0 & 0 & \cdots & 1
\end{bmatrix}.
\end{equation*}
This signals the presence of a degeneracy inside the circle for the pair $(d_1,d_2)$. Minimizing the difference
\begin{equation*}
    d_1(\omega,\Delta_{\mathrm{s},0})-d_2(\omega,\Delta_{\mathrm{s},0})
\end{equation*}
through MATLAB's \texttt{fminsearch}, using the center of the circle as initial seed, yielded the approximate degeneracy location:
\begin{equation*}
    \begin{bmatrix}
        \omega \\ \Delta_{\mathrm{s},0}
    \end{bmatrix}\approx
    \begin{bmatrix}
    -2.4943524 \gamma \\ -0.057548294 \gamma
    \end{bmatrix}.
\end{equation*}
with difference $d_1-d_2\approx1.9\times10^{-9}$.

\section{Conclusions}
In this work, we have developed a detailed framework for the smooth decomposition of matrix-valued functions arising in driven-dissipative multimode bosonic quantum systems -- specifically, $\omega$- and $\blam$-symplectic transformations -- with a focus on their numerical computation in a way that preserves their algebraic structures. Importantly, we have clarified that a smooth decomposition is not merely a collection of pointwise or static decompositions, but rather a global object with nontrivial geometric and topological properties. In this context, we have discussed the role of degeneracies and the associated codimension, which play a crucial role in shaping the behavior of the decomposition.
These theoretical insights have been applied to a class of physically relevant problems in multimode quantum open systems, where we analyzed the emergence of both avoided and genuine crossings in the squeezing spectrum. We showed that avoided crossings can be exploited for the engineering of flat-band squeezed states, and identified the presence of genuine crossings associated with nontrivial topological Berry phases in the singular vectors. We also highlighted the hypersensitivity of the singular vectors in proximity to these degeneracies, which may serve as indicators of genuine crossings in nearby systems. These findings open new perspectives for the spectral engineering of continuous-variable quantum states.


\section*{Acknowledgements}
The authors thank Prof. Luca Dieci for useful discussions. This work has been partially supported by GNCS--INdAM and by the PRIN2022PNRR n. P2022M7JZW ``SAFER MESH - Sustainable mAnagement oF watEr Resources ModEls and numerical MetHods" research grant, funded by the Italian Ministry of Universities and Research (MUR), by the European Union through Next Generation EU, M4.C2.1.1, CUP H53D23008930001 and by Plan France 2030 through the project ANR-22-PETQ-0013 (OQuLus).

\bibliographystyle{quantum}
\bibliography{biblio_avoided}


\onecolumn
\appendix

\section{On the joint-MVD of a complex symmetric  matrix-valued function depending on a parameter}\label{sec:takagi vs jointMVD}

Here we show that, for a smooth complex symmetric matrix-valued function \begin{align}\label{eq:matfun_appendix}
\omega\in[0,\tau]\mapsto A(\omega)\in\C^\nxn, & \\
A(\omega)=A(\omega)^T \text{ for all } \omega\in[0,\tau], &
\end{align}
with distinct and non-zero singular values, joint-MVD (\cite{DIPU_LAA2024}) 
and smooth Takagi decomposition 
are essentially the same.
Recall that, given $A\in\C^\nxn$, an SVD of $A$ is a decomposition 
\begin{equation}\label{eq:svd_appendix}
    A=U\Sigma V^\dag,
\end{equation}
where $U, V \in\C^\nxn$ are unitary and $\Sigma=\diag(\sigma_1,\sigma_2,\ldots,\sigma_n)$, with non-negative $\sigma_j$'s arranged in non-increasing order. If $A=A^T$, $A$ also admits a Takagi decomposition \begin{equation}\label{eq:takagi_appendix}
    A=W\Sigma W^T,
\end{equation} where $W$ is unitary and $\Sigma$ is as above. Henceforth, we assume that $A(\omega)$ is as in \eqref{eq:matfun_appendix} and has distinct and non-zero singular values for all $\omega$. According to \cite{DIPU_LAA2024}, the factors $U(\omega), \Sigma(\omega), V(\omega)$ of the joint-MVD of $A$ satisfy the following set of differential equations
\begin{equation}\label{eq:Ck_SVD_DAE}
\left\{
\begin{array}{l}
 \dot\Sigma = U^\dag\dot AV-H\Sigma+\Sigma K ,\\
 \dot U = UH, \\
 \dot V = VK, \\
\end{array}.
\right.
\end{equation}
where the matrix functions $H$ and $K$ are skew-Hermitian on $[0,1]$, have off-diagonal entries given by
\begin{equation}\label{eq:offdiagHK_expression}
\left\{\begin{array}{l}
H_{j\ell}=\dfrac{\sigma_\ell(U^\dag\dot A V)_{j\ell}+\sigma_j
(U^\dag\dot A V)_{\ell j}}{\sigma_\ell^2-\sigma_j^2}, \\
K_{j\ell}=\dfrac{\sigma_\ell(U^\dag\dot A V)_{\ell j}+\sigma_j
(U^\dag\dot A V)_{j\ell}}{\sigma_\ell^2-\sigma_j^2},
\end{array}\right.
\end{equation}
for all $j\ne\ell$, and diagonal entries that are purely imaginary and satisfy
\begin{flalign}\label{eq:diagHK}
\qquad\qquad & H_{jj}-K_{jj}=i \dfrac{\Img\big((U^\dag\dot AV)_{jj}\big)}{\sigma_j}, && \\ \label{eq:diagHK_minvar_requ}
& H_{jj}+K_{jj}=0, &&
\end{flalign}
for all $j=1,\dots,n$. We point out that \eqref{eq:Ck_SVD_DAE} must obviously be complemented with appropriate initial conditions (that is, an SVD for $A(0)$). Moreover, we remark that equations~(\ref{eq:offdiagHK_expression}, \ref{eq:diagHK}) must be satisfied by {\bf any} smooth SVD of $A$, whereas equation \eqref{eq:diagHK_minvar_requ} is the extra requirement that ensures that \eqref{eq:svd_appendix} is a joint-MVD.  

According to \cite{DiPaPu_takagi}, $A$ admits also a smooth Takagi decomposition whose smooth factors $W(\omega), \Sigma(\omega)$ satisfy

\begin{equation}\label{eq:takagi_DAE}
\left\{
\begin{array}{l}
\dot \Sigma=W^\dag	\dot A W^* - E\Sigma+\Sigma E^*, \\
\dot W=WE,
\end{array}
\right.
\end{equation}
where $E$ is skew-Hermitian, has off-diagonal entries 
\begin{equation}\label{E_edo}
E_{j\ell}=\frac{\Real(W^\dag\dot A W^*)_{j\ell}}{\sigma_\ell-\sigma_j} + i \frac{\Img(W^\dag\dot A W^*)_{j\ell}}{\sigma_\ell+\sigma_j},
\end{equation} 
for all $j\ne \ell$, and diagonal entries 
\begin{equation}\label{H_edo2}
E_{jj}=i \frac{\Img(W^\dag\dot A W^*)_{jj}}{2 \sigma_j},
\end{equation}
for all $j=1,\ldots,n$.

We now state the main result of this section.
\begin{thm}\label{thm:takagi_is_jMVD} Let $A(\omega)$ be as in \eqref{eq:matfun_appendix}, and suppose it has distinct and non-zero singular values for all $\omega$. Then:
\begin{enumerate}
    \item[i)] any smooth Takagi decomposition of $A$ is a joint-MVD of $A$;
    \item[ii)] any joint-MVD of $A$ with unitary factors $U, V$ satisfying $V(0)=U^*(0)$ is a smooth Takagi decomposition of $A$.
\end{enumerate}
\end{thm}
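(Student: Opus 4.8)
The plan is to reduce both items to the single observation that a Takagi decomposition $A=W\Sigma W^{T}$ is nothing but an SVD $A=U\Sigma V^{\dagger}$ whose unitary factors satisfy $V=\overline{U}$ (equivalently $V=U^{*}$): take $U=W$ and $V=\overline{W}$, and note that then $V^{\dagger}=W^{T}$; conversely, any SVD with $V=\overline{U}$ is precisely the factorization $A=U\Sigma U^{T}$. Both items then become statements about how the ``conjugacy'' constraint $V=\overline{U}$ interacts with the minimum‑variation normalization \eqref{eq:diagHK_minvar_requ}, and I would settle them by short differential computations rather than by matching the explicit entrywise formulas \eqref{eq:offdiagHK_expression}, \eqref{eq:diagHK}, \eqref{E_edo}, \eqref{H_edo2}.

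For item i) I would start from a smooth Takagi decomposition $A(\omega)=W(\omega)\Sigma(\omega)W(\omega)^{T}$ and set $U:=W$, $V:=\overline{W}$, so that $A=U\Sigma V^{\dagger}$ is a smooth SVD. Writing $H:=U^{\dagger}\dot U=W^{\dagger}\dot W$ and $K:=V^{\dagger}\dot V$, and using $\overline{W^{\dagger}}=W^{T}$ together with $\overline{\dot W}=\dot{\overline W}$, one gets $K=\overline{H}$, hence $K_{jj}=\overline{H_{jj}}$ for every $j$. Since differentiating $W^{\dagger}W=I$ shows that $H$ is skew‑Hermitian, its diagonal entries are purely imaginary, so $K_{jj}=-H_{jj}$. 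This is exactly the extra requirement \eqref{eq:diagHK_minvar_requ} which, on top of the relations \eqref{eq:offdiagHK_expression}, \eqref{eq:diagHK} that hold for every smooth SVD, singles out the joint‑MVD; hence the Takagi decomposition, read as an SVD, is a joint‑MVD.

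For item ii) I would exploit the symmetry $A=A^{T}$. Transposing $A=U\Sigma V^{\dagger}$ gives $A=\overline{V}\,\Sigma\,U^{T}=\overline{V}\,\Sigma\,(\overline{U})^{\dagger}$, so $(\overline{V},\Sigma,\overline{U})$ is a second smooth SVD of $A$. Because the singular values are distinct for all $\omega$, any two smooth SVDs of $A$ differ by one common smooth diagonal phase matrix, so there is a smooth diagonal phase matrix $\Phi(\omega)$, with diagonal entries $e^{i\phi_{j}(\omega)}$, such that $\overline{V}=U\Phi$, i.e.\ $V=\overline{U}\,\Phi^{-1}$. Evaluating at $\omega=0$ and using the hypothesis $V(0)=\overline{U(0)}$ forces $\Phi(0)=I$. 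It remains to show $\Phi\equiv I$: differentiating $V=\overline{U}\,\Phi^{-1}$ and using $U^{T}\overline{U}=I$ and $\overline{\Phi}=\Phi^{-1}$ gives $K=V^{\dagger}\dot V=\Phi\,\overline{H}\,\Phi^{-1}-i\,\diag(\dot\phi_{1},\dots,\dot\phi_{n})$, so (conjugation by a diagonal matrix leaves the diagonal untouched) $K_{jj}=\overline{H_{jj}}-i\dot\phi_{j}$. Imposing the joint‑MVD condition $K_{jj}=-H_{jj}$ and recalling $\Real H_{jj}=0$ yields $\dot\phi_{j}\equiv 0$; with $\Phi(0)=I$ this gives $\Phi\equiv I$, hence $V=\overline{U}$ and $A=U\Sigma U^{T}$ is a smooth Takagi decomposition.

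I expect the main obstacle to be the smooth (not merely pointwise) version of ``two SVDs differ by a common phase matrix'', which genuinely uses that the singular values stay distinct: one checks that $\Phi(\omega)=U(\omega)^{\dagger}\overline{V(\omega)}$ is smooth and pointwise a diagonal phase matrix, hence a smooth phase matrix. The rest is purely conjugation/transposition bookkeeping — $\overline{W^{\dagger}}=W^{T}$, $(\overline{U})^{\dagger}=U^{T}$, $\overline{\Phi}=\Phi^{-1}$ — none of which is deep. What really matters is making the constraint $V=\overline{U}$ mesh correctly with the min‑variation normalization, and it is precisely this that makes both implications go through.
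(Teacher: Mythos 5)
Your proof is correct. Part i) is essentially the paper's own argument: identify the Takagi factors with the SVD factors via $U=W$, $V=\overline{W}$, observe $K=\overline{H}$ with $H$ skew-Hermitian, and conclude that the diagonal condition $H_{jj}+K_{jj}=0$ of \eqref{eq:diagHK_minvar_requ} holds automatically. Part ii), however, follows a genuinely different route. The paper notes that, since $A=A^T$ and the functional \eqref{eq:joint_MVD_integral} is symmetric under $(U,V)\mapsto(\overline{V},\overline{U})$, the transposed factorization $A=\overline{V}\,\Sigma\,U^T$ is \emph{also} a joint-MVD; both decompositions then satisfy the same differential-algebraic system \eqref{eq:Ck_SVD_DAE}--\eqref{eq:diagHK_minvar_requ} with the same initial data, and uniqueness of solutions forces $\overline{V}=U$ throughout. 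You instead parametrize the discrepancy between $V$ and $\overline{U}$ by a smooth diagonal phase matrix $\Phi$ (using pointwise uniqueness of the SVD up to phases, valid because the singular values stay distinct and nonzero), pin $\Phi(0)=I$ from the hypothesis, and show $\dot\phi_j\equiv 0$ by feeding the relation $K_{jj}=\overline{H_{jj}}-i\dot\phi_j$ into the minimum-variation condition. Your computation is correct (I checked $K=\Phi\overline{H}\Phi^{-1}-i\,\diag(\dot\phi_1,\ldots,\dot\phi_n)$ and the conclusion $\dot\phi_j=0$). What each buys: the paper's argument is shorter but leans on existence-and-uniqueness for the initial value problem defining the joint-MVD; yours is more self-contained and makes explicit \emph{where} the normalization \eqref{eq:diagHK_minvar_requ} is used, at the modest cost of the smooth-lifting lemma for the phases $\phi_j$ (unproblematic on an interval, as you note). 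One cosmetic point in your favour: the paper's proof of i) calls $H$ and $K$ ``skew-symmetric'' where skew-Hermitian is meant; you state this correctly.
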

\begin{proof}
First, we observe that any Takagi decomposition is obviously also an SVD. Let $A(\omega)=W(\omega)\Sigma(\omega)W(\omega)^T$ be a smooth Takagi decomposition, and set $U=W, V=W^*$. Then, we have $A(\omega)=U(\omega)\Sigma(\omega)V(\omega)^\dag$, with $K$ and $H$ in \eqref{eq:Ck_SVD_DAE} related by $K=H^*$. Being skew-symmetric, $H$ and $K$ have purely imaginary diagonal entries; therefore we must have $H_{jj}=-K_{jj}$ for all $j=1,\ldots,n$. This shows (see \eqref{eq:diagHK_minvar_requ}) that the Takagi decomposition of $A$ is in fact a joint-MVD, and concludes the proof of part \emph{i)}.

On the other hand, let $A(\omega)=U(\omega)\Sigma(\omega)V(\omega)^\dag$ be a joint-MVD with $V(0)=U(0)^*$ (or, equivalently, $V(0)^*=U(0)$). Since $A(\omega)=A(\omega)^T$, it follows from \eqref{eq:joint_MVD_integral} that also $A(\omega)=V(\omega)^*\Sigma(\omega)U(\omega)^T$ is a joint-MVD. It follows that the corresponding factors of these two joint-MVDs satisfy the same set of differential-algebraic equations \eqref{eq:Ck_SVD_DAE}, \eqref{eq:offdiagHK_expression}, \eqref{eq:diagHK}, \eqref{eq:diagHK_minvar_requ}, with the same initial conditions. Therefore, we must have, in particular, $V(\omega)^*=U(\omega)$ for all $\omega$. This means that  $A(\omega)=U(\omega)\Sigma(\omega) V(\omega)^\dag=U(\omega)\Sigma(\omega) U(\omega)^T$ is a smooth Takagi decomposition. This concludes the proof.
\end{proof}

\section{Additional figures on rapid variation of the singular vectors}\label{Appendix:singular vectors}

Here appendix we show the behavior of the morphing supermodes of the case studied in subsection~\ref{subsec:four}. In particular, Figure~\ref{fig: n=4 codim=3 eigenvectors} consists of eight panels, each corresponding to a fixed value of $m$, and showing the absolute value of the coefficient $U_{m,j}(\omega)$ across all supermodes $j=\{0,\ldots,7\}$. The vertical black dash-dotted line marks the position of the first avoided crossing at $\omega=0.164\gamma$ occurring between the anti-squeezed supermodes $j=2$ and $j=3$. For the symmetry property of the spectrum of the singular values, the same avoided crossing also occurs between the squeezed supermodes $j=6$ and $j=7$. The vertical black dashed line marks the position of the second avoided crossing at $\omega=0.794\gamma$ occurring between the anti-squeezed supermodes $j=1$ and $j=2$. The same avoided crossing also occurs between the squeezed supermodes $j=5$ and $j=6$. Focusing on the top-left panel of the figure, which displays the absolute value of the $m=0$ coefficient for the eight supermodes, we observe a rapid rotation of the anti-squeezed supermode pair $j=2$ (green line) and $j=3$ (red line) at the first avoided crossing, as well as a rotation of the squeezed pair $j=6$ (pink line) and $j=7$ (gray line).
At the second avoided crossing, a similar rotation occurs between the anti-squeezed supermodes $j=1$ (orange line) and $j=2$ (green line), and between the squeezed modes $j=5$ (brown line) and $j=6$ (pink line).
The same behavior is observed in the panels corresponding to the other coefficients.
\newcommand{\imgwidth}{\textwidth}
\newcommand{\imgheight}{0.7\textwidth}
\begin{figure*}[ht]
    \centering
    \begin{minipage}[b]{0.4\textwidth}
        \centering
        \includegraphics[width=\imgwidth, height=\imgheight]{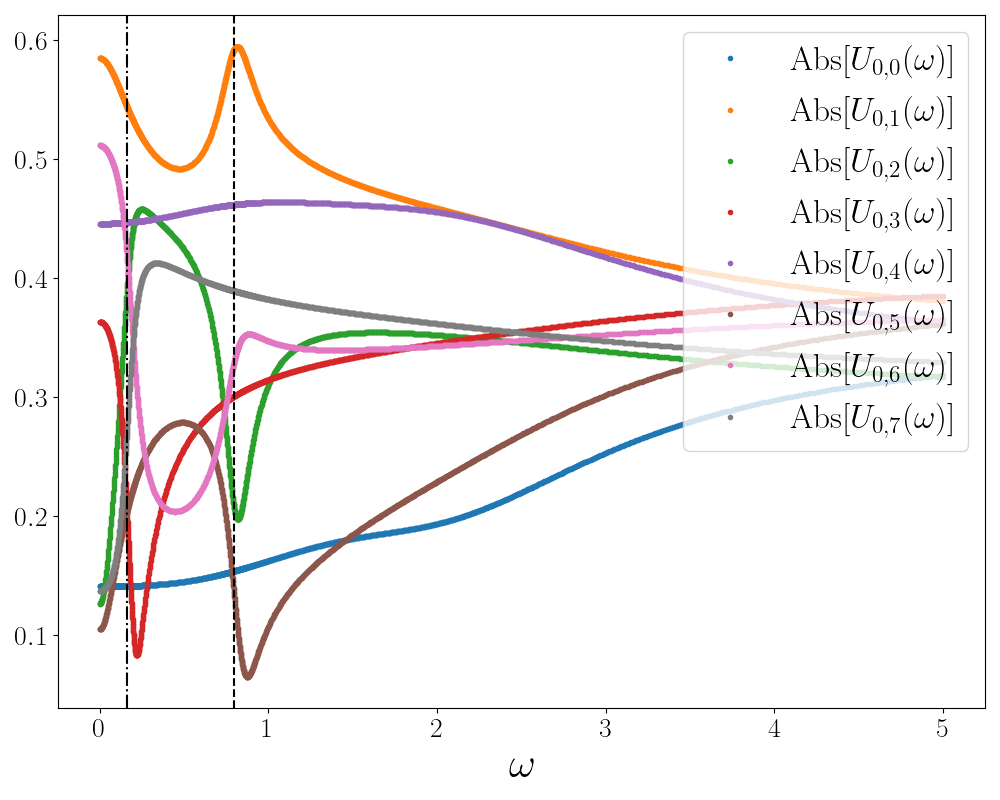}
    \end{minipage}
    \begin{minipage}[b]{0.4\textwidth}
        \centering
        \includegraphics[width=\imgwidth, height=\imgheight]{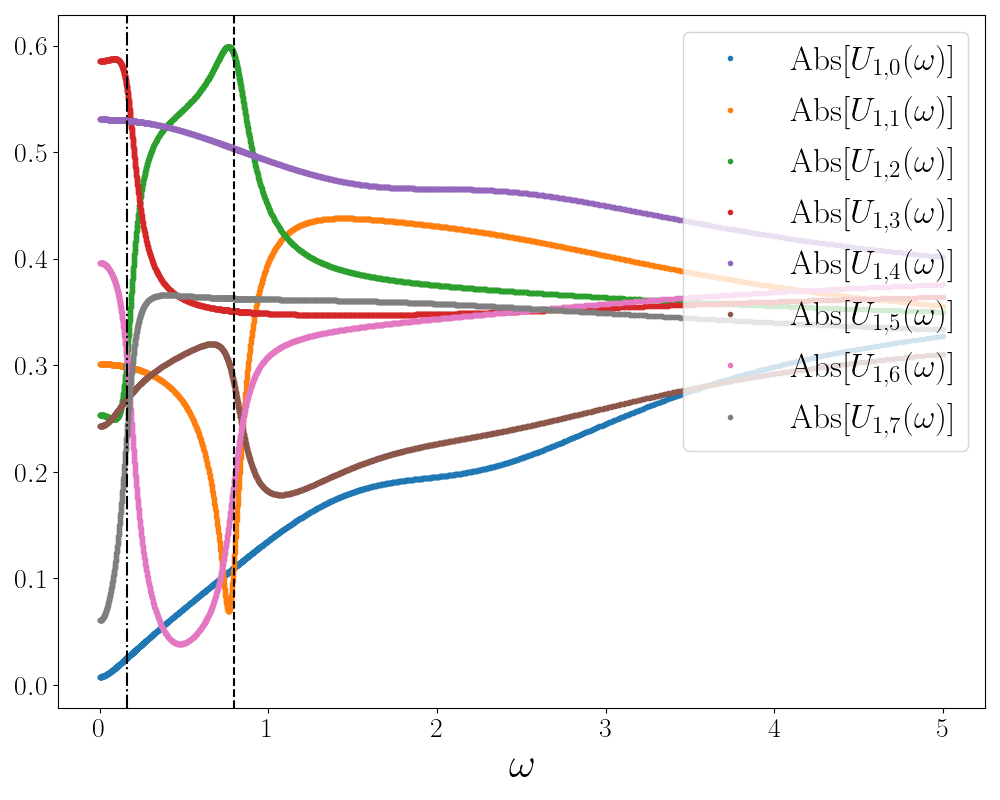}
    \end{minipage}
    
    \begin{minipage}[b]{0.4\textwidth}
        \centering
        \includegraphics[width=\imgwidth, height=\imgheight]{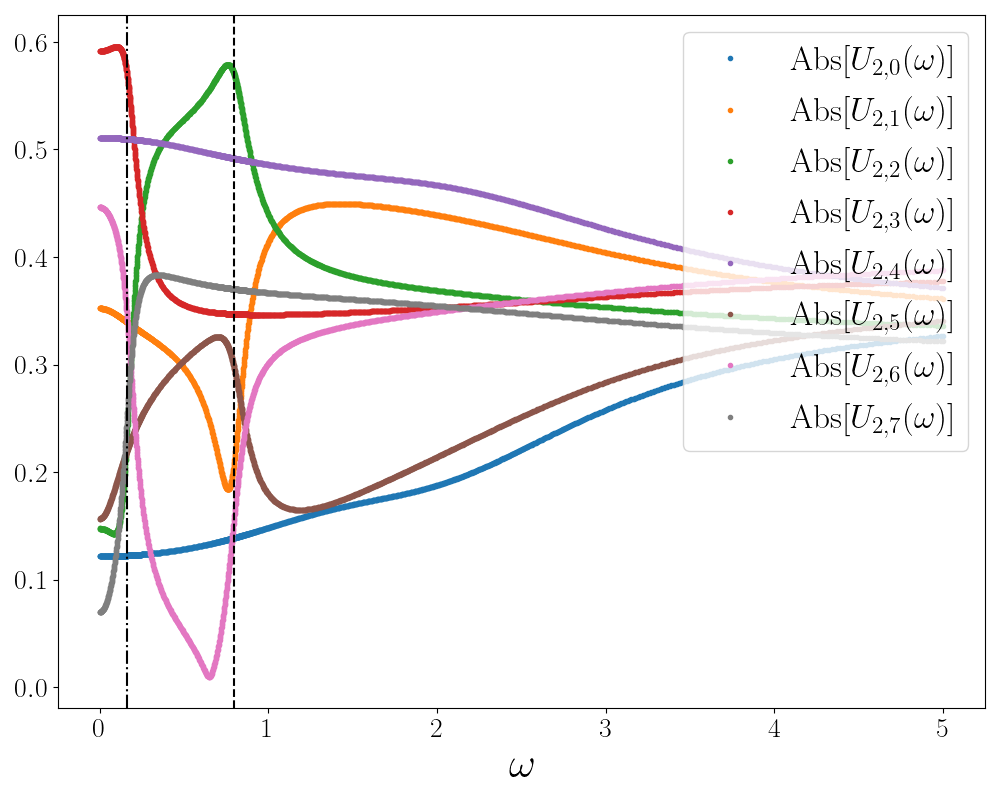}
    \end{minipage}
    \begin{minipage}[b]{0.4\textwidth}
        \centering
        \includegraphics[width=\imgwidth, height=\imgheight]{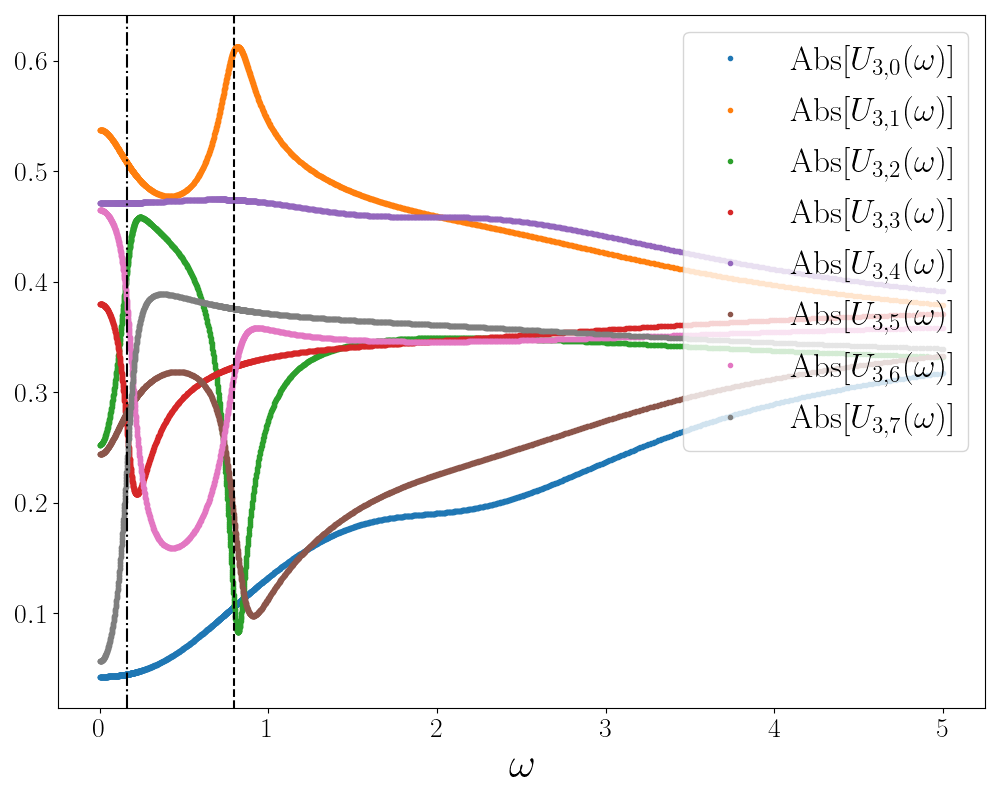}
    \end{minipage}
    
    \begin{minipage}[b]{0.4\textwidth}
        \centering
        \includegraphics[width=\imgwidth, height=\imgheight]{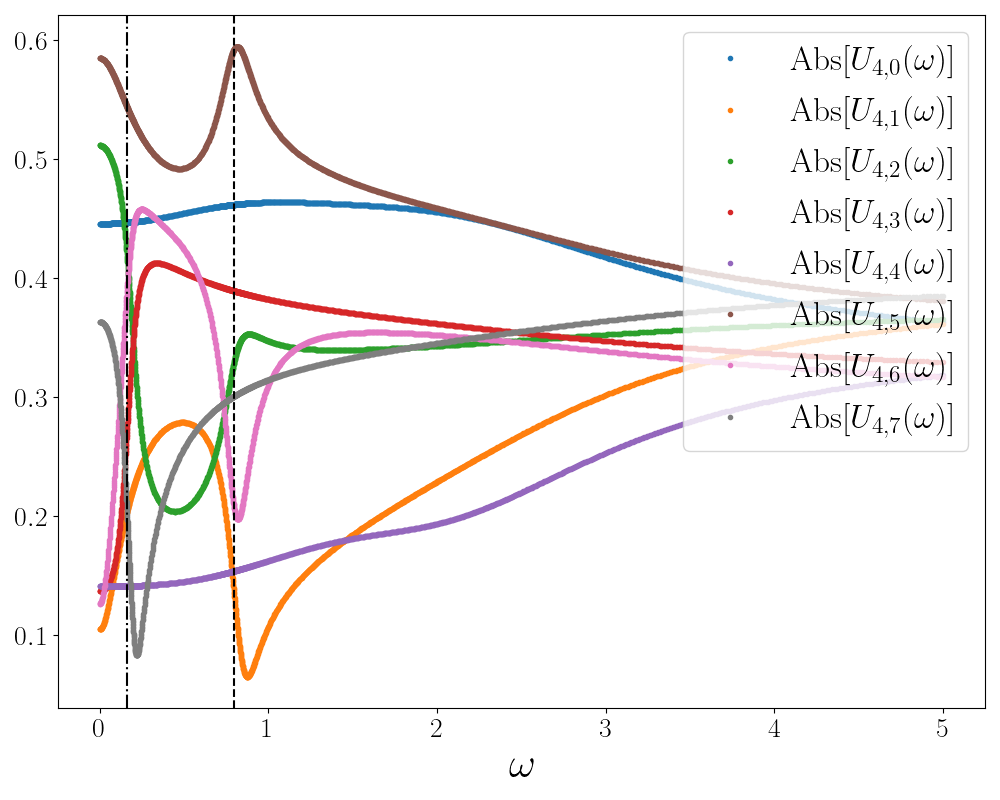}
    \end{minipage}
    \begin{minipage}[b]{0.4\textwidth}
        \centering
        \includegraphics[width=\imgwidth, height=\imgheight]{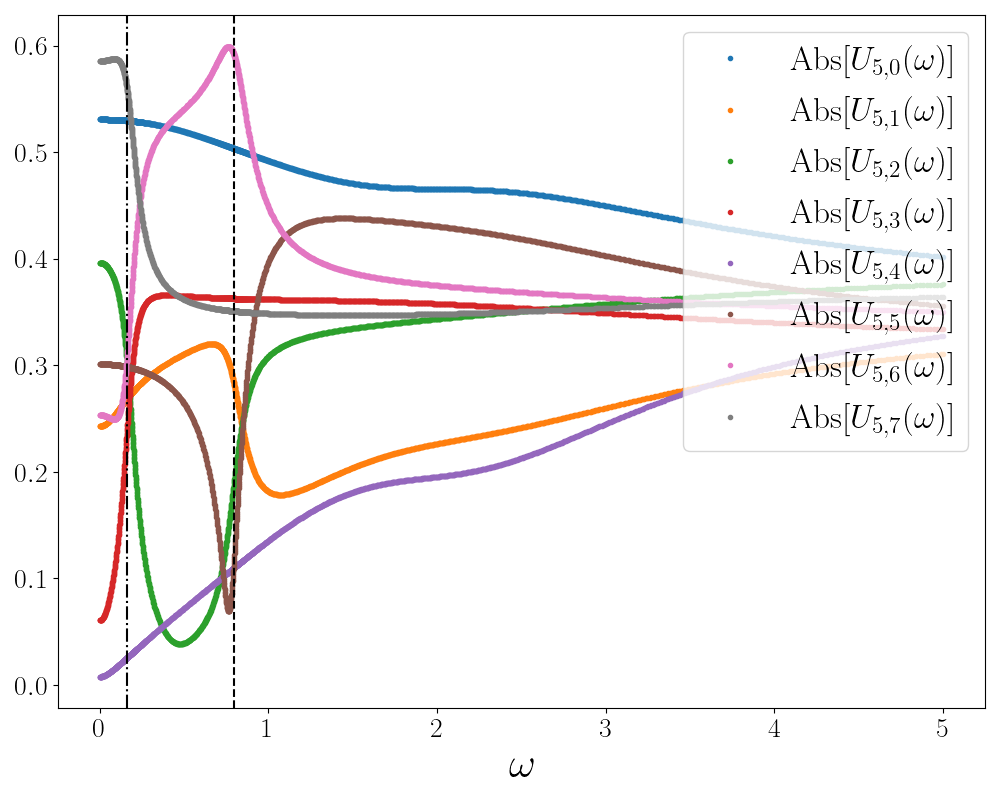}
    \end{minipage}
    
    \begin{minipage}[b]{0.4\textwidth}
        \centering
        \includegraphics[width=\imgwidth, height=\imgheight]{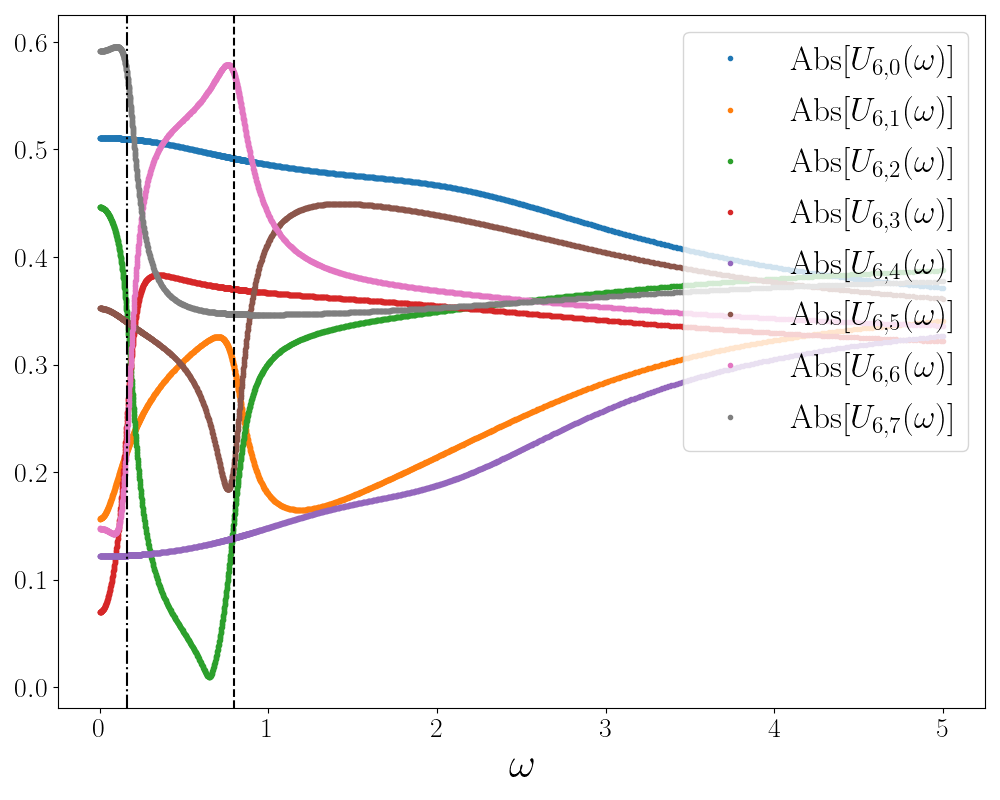}
    \end{minipage}
    \begin{minipage}[b]{0.4\textwidth}
        \centering
        \includegraphics[width=\imgwidth, height=\imgheight]{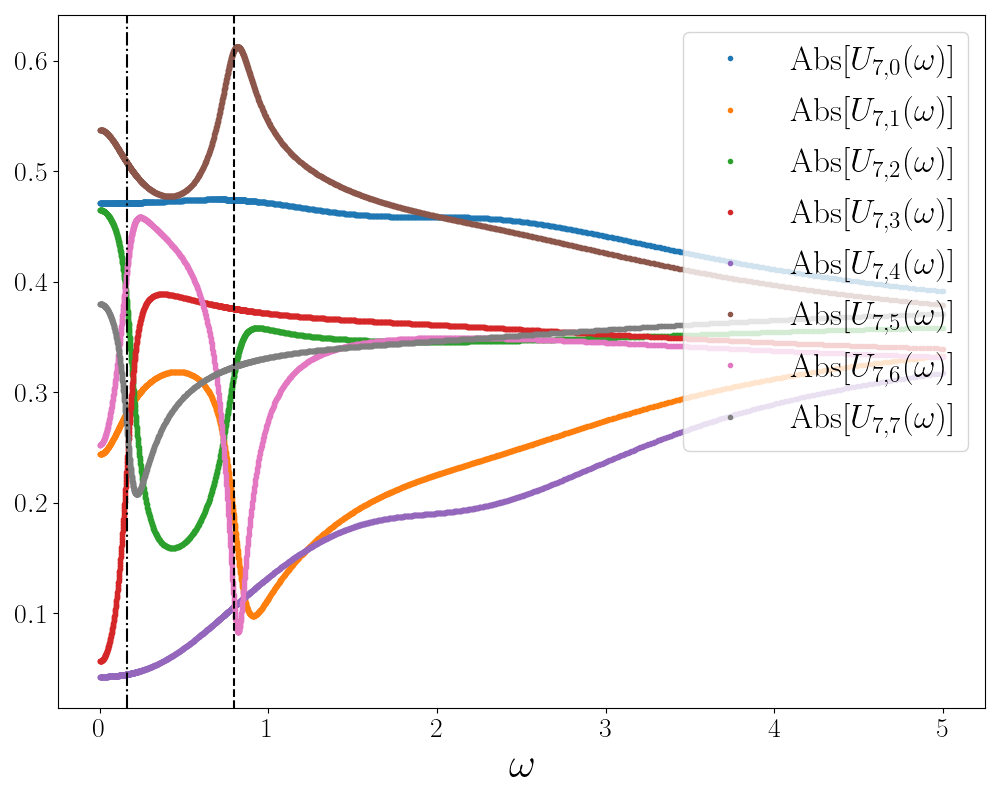}
    \end{minipage}
    
    \caption{Case $n=4$ and $\text{codim}=3$. Absolute value of the components ($m=0,\ldots,7$) of the eight ($n=0,\ldots,7$) eigenvectors $U_{m,n}(\omega)$. The vertical black-dash-dotted line marks the value of $\omega=0.164\gamma$ where avoided crossing between supermodes $j=2$ and $j=3$ occurs. The vertical black-dashed line marks the value of $\omega=0.794\gamma$ where avoided crossing between supermodes $j=1$ and $j=2$ occurs.}
    \label{fig: n=4 codim=3 eigenvectors}
\end{figure*}

\FloatBarrier
\end{document}